\newcommand{\An}{\operatorname{LCA}}
\def\A{\mathcal{A}}
\def\d{\mathrm{d}}
\newcommand{\ignore}[1]{}
\newtheorem*{theorem6}{Theorem 6}
\newtheorem{theorem}{Theorem}
\newtheorem{lemma}{Lemma}
\newtheorem{claim}{Claim}
\newtheorem{definition}{Definition}
\newtheorem*{theorem*}{Theorem}
\newcommand{\Obst}{\textsc{Obst}}
\newcommand{\Cost}{\emph{Cost}}
\newcommand{\DS}{\texttt{DS}}
\newcommand{\STAT}{\textsc{Stat}}
\def\d{\mathrm{d}}
\def\A{\mathcal{A}}
\def\SP{\textsc{ST}}
\def\DS{\textsc{DS}}
\begin{document}


\title{OBST: A Self-Adjusting Peer-to-Peer Overlay\\Based on Multiple BSTs}

\author{
Chen Avin$^1$, Michael Borokhovich$^{1,*}$, Stefan Schmid$^2$\\
{\small $^1$ Ben Gurion University, Beersheva, Israel;~~ $^2$ TU Berlin \& T-Labs, Berlin, Germany}\\
{\small \texttt{\{avin,borokhovich\}@cse.bgu.ac.il}; \texttt{stefan@net.t-labs.tu-berlin.de}}\\
\thanks{${}^{*}$Michael Borokhovich was supported in part by the Israel Science Foundation (grant 894/09).}
}


\date{}

\maketitle

\begin{abstract}
The design of scalable and robust overlay topologies has been a main research subject since
 the very origins of peer-to-peer (p2p) computing. Today, the
 corresponding optimization tradeoffs are fairly well-understood, at least in the static case and from a worst-case perspective.

This paper revisits the peer-to-peer topology design problem from a self-organization perspective. We initiate the study of topologies which
are \emph{optimized to serve the communication demand}, or even self-adjusting as demand changes.
The appeal of this new paradigm lies in the opportunity to be able to go beyond
the lower bounds and limitations imposed by a static, communication-oblivious, topology. For example, the goal of having short routing paths (in terms of hop count) does no longer conflict with the requirement of having low peer degrees.


We propose a simple overlay topology $\Obst(k)$ which is composed of $k$ (rooted and directed) \emph{Binary Search Trees (BSTs)}, where $k$ is a parameter.
We first prove some fundamental bounds on what can and cannot be achieved optimizing a topology towards a \emph{static} communication pattern (a static $\Obst(k)$). In particular, we show that the number of BSTs that constitute the overlay can have a large impact on the routing costs, and that a single additional BST may reduce the amortized
communication costs from $\Omega(\log{n})$ to $O(1)$, where $n$ is the number of peers.
 Subsequently, we discuss a natural self-adjusting extension of $\Obst(k)$, in which frequently communicating partners are ``splayed together''.

\end{abstract}


\setcounter{page}{1}

\section{Introduction}

Classic literature on the design of peer-to-peer (p2p) topologies typically considers the optimization of \emph{static} properties, such as the peer degree or the network diameter in
the worst case. An appealing alternative is to optimize a p2p system (or more generally, a distributed data structure) based on the communication or usage patterns, either statically (based on known traffic statistics) or dynamically, exploiting temporal localities for self-adjustments.

One of the main metrics to evaluate the performance of a self-adjusting
 network is the \emph{amortized cost}: the worst-case communication cost over time and per request.
Splay trees are the most prominent example of the self-adjustment concept in the context of classic data structures: in their seminal work, Sleator and Tarjan~\cite{sleator1985self} proposed self-adjusting binary search trees where popular items or \emph{nodes} are moved closer to the \emph{root} (where the lookups originate), exploiting potential non-uniformity in the access patterns.

\textbf{Our Contributions.}
This paper initiates the study of how to extend the splay tree concepts~\cite{ipdps13stefan,sleator1985self} to multiple trees, in order to design self-adjusting
 p2p \emph{overlays}. Concretely, we propose a \emph{distributed variant} of the splay tree to build the $\Obst$ overlay: in this overlay, frequently communicating partners are located (in the static
 case) or moved (in the dynamic case)
 topologically close(r), without sacrificing local routing benefits: While in a standard \emph{binary search tree (BST)} a request always originates at the root (we will refer to this problem as the \emph{lookup problem}), in the distributed BST variant, \emph{any pair} of nodes in the network can communicate; we will refer to the distributed variant as the \emph{routing problem}.

 The reasons for focusing on BSTs are based on their simplicity and powerful properties: they naturally support local, greedy routing, they are easily self-adjusted, they support join-leave operations in a straight-forward manner, and they require low peer degrees. The main drawback is obviously the weak robustness imposed by the tree structure, and we address this by using multiple trees.


 The proposed $\Obst(k)$ overlay consists of set of \emph{$k$ distributed BSTs}. (See Figure~\ref{fig:mosts} for an example of a $\Obst(2)$.) We first study how
 the communication cost in a static $\Obst(k)$ depends on the number $k$ of BSTs, and give an upper bound which
 shows that the overlay strictly improves with larger $k$. In fact, we will show that in some situations,
 changing from $k$ to $k+1$ BSTs can make a critical difference in the \emph{routing} cost. Interestingly,
 such a drastic effect is not possible on the classical \emph{lookup} operations in a BST.
 This demonstrates that the problem of optimizing \emph{routing} on a BST has some key differences from the \emph{lookup} problem that was, and still is,  extensively researched.

 \begin{figure}[ht]
	\begin{center}%
\includegraphics[width=0.5\columnwidth]{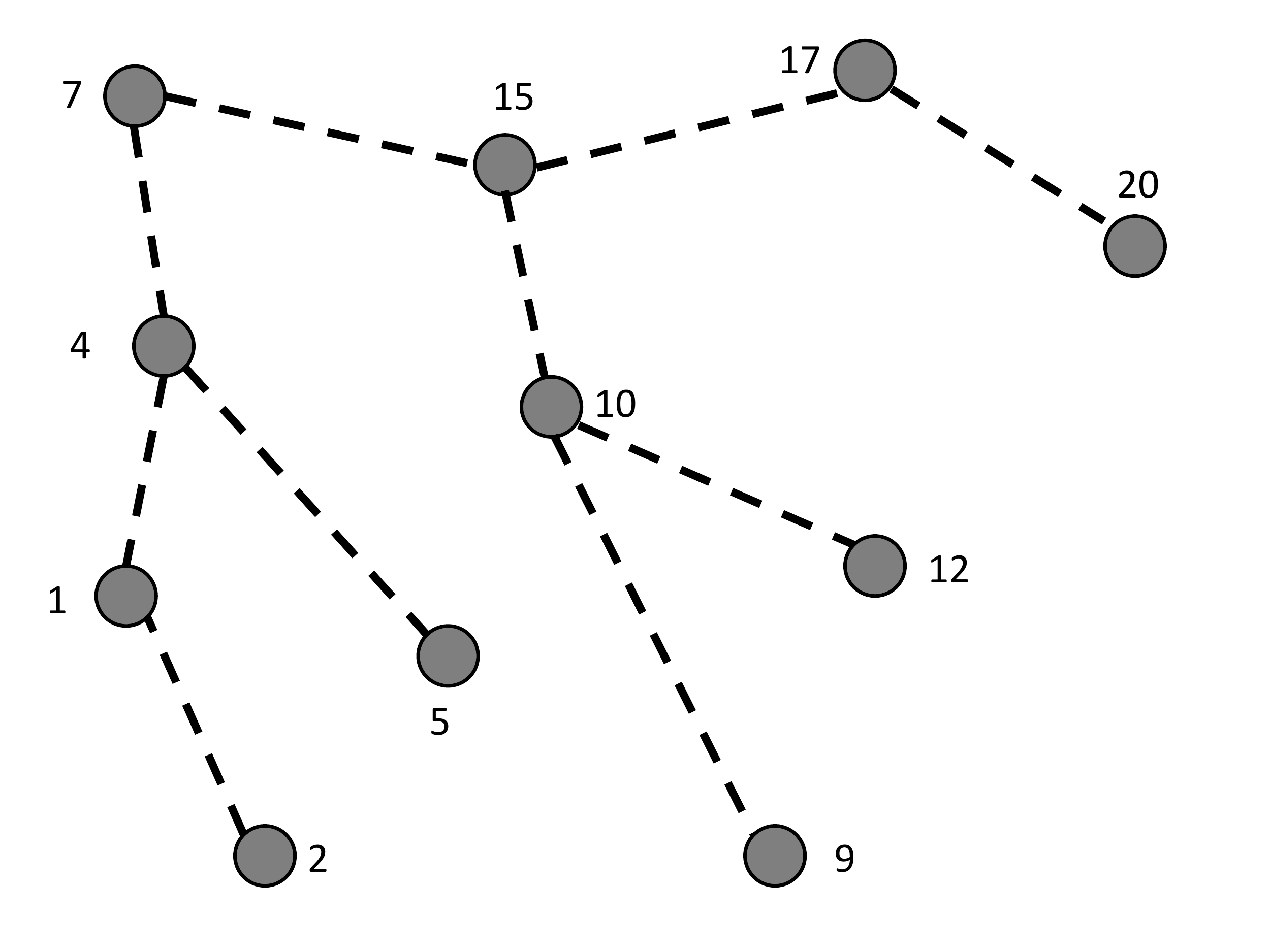}~~\includegraphics[width=0.5\columnwidth]{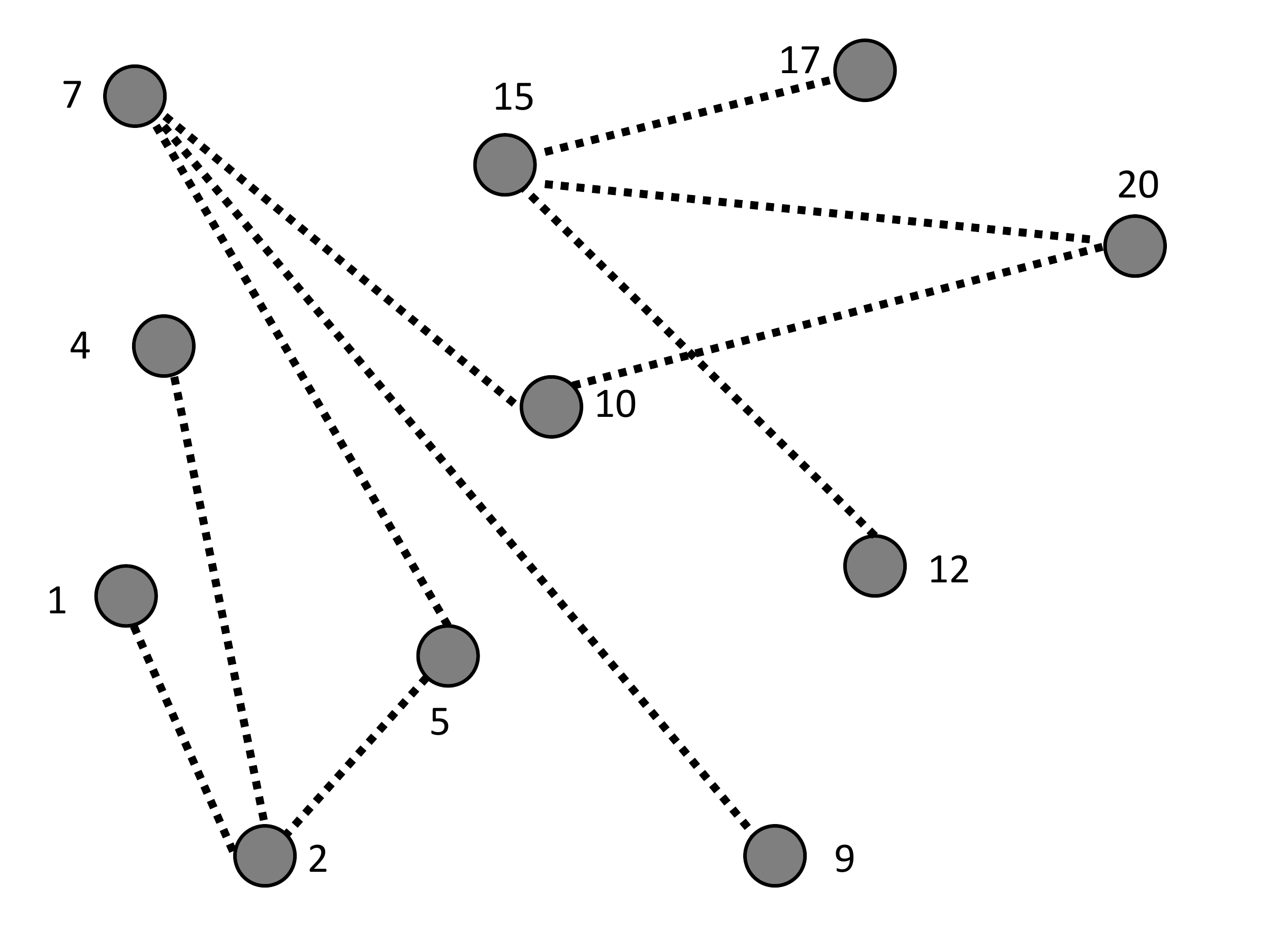}\\%
\vspace{10mm}
\includegraphics[width=0.9\columnwidth]{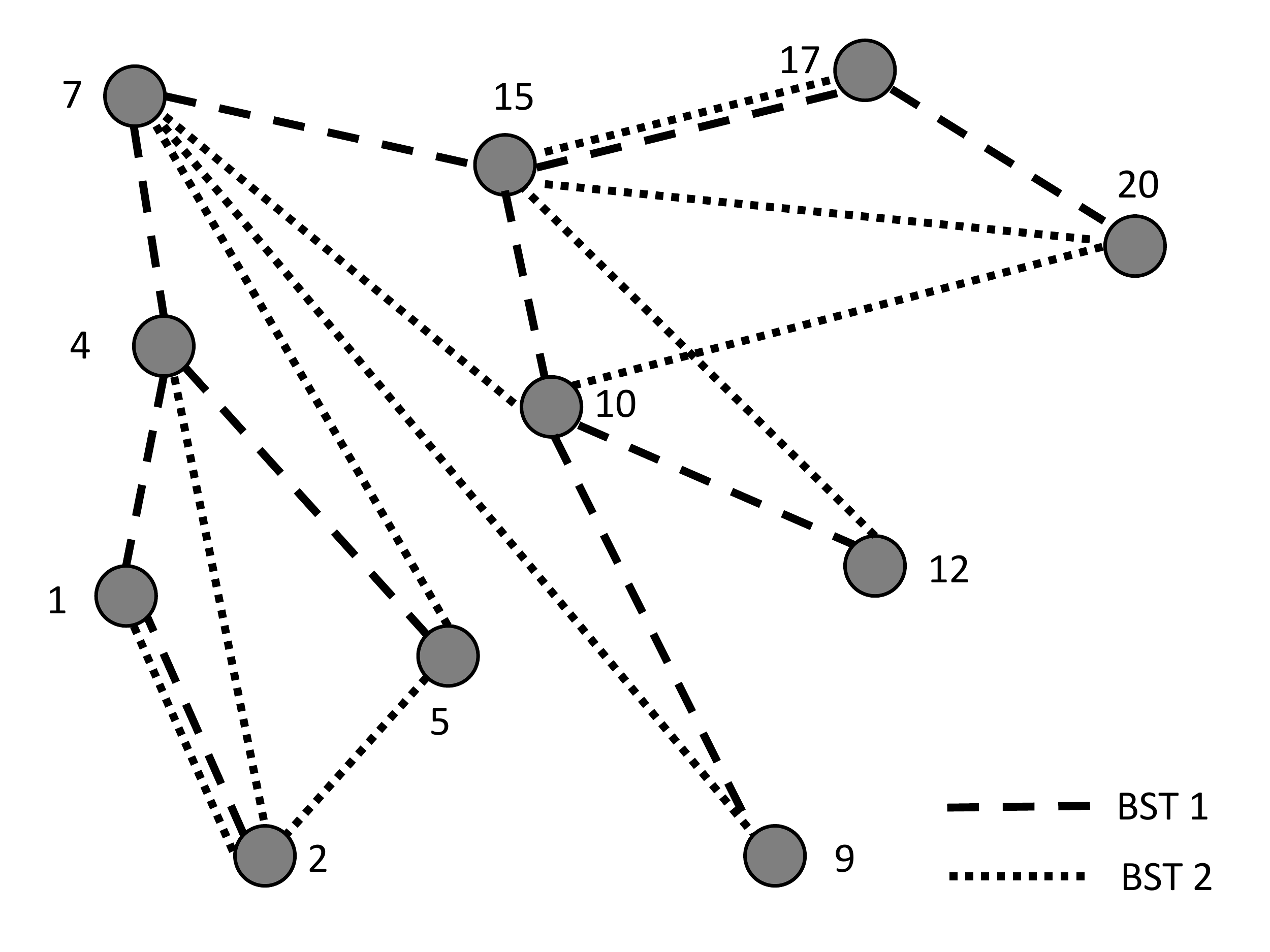}\\%
	\end{center}
\caption{Example of $\Obst(2)$ consisting of two BSTs. Top left: BST 1 (e.g., rooted at
peer $v_7$). Top right: BST 2 (e.g., rooted at peer $v_{10}$). Bottom: combined BSTs.}\label{fig:mosts}
\end{figure}

After studying the static case, we also describe a dynamic and self-adjusting variant of $\Obst$
which is inspired by classic splay trees: communication partners are topologically ``splayed together''.
These splay operations are completely local and hence efficient. We complement our formal analysis by
extensive simulations. These simulation results confirm our theoretical bounds but also reveal some desirable properties in the
time domain (e.g., robustness to failures and churn, or convergence to our static examples).

\textbf{Paper Organization.}
The remainder of this paper is organized as follows. Section~\ref{sec:model} introduces our formal model and the definitions, and Section~\ref{sec:background}
provides the necessary background on binary search and splay trees. We study static $\Obst(k)$ overlays in Section~\ref{sec:staticopt} and dynamic $\Obst(k)$ overlays in Section~\ref{sec:dynopt}. Section~\ref{sec:simulations} reports on our simulation results. The routing model is compared to classic lookup model in Section~\ref{sec:discussion}. After reviewing related work in Section~\ref{sec:relwork}, we conclude our paper in Section~\ref{sec:conclusion}. Finally, in the appendix, the existence and limitations of optimal
overlays are discussed.

\section{Model and Definitions}\label{sec:model}

We describe the p2p overlay network as a graph $\mathcal{H}=(V,E)$ where $V=\{v_1,\ldots,v_n\}$ is the set of peers and $E$ represents their connections.
For simplicity, we will refer by $v_i$ both to the corresponding peer as well as the peer's (unique) identifier; sometimes, we will simply write $i$ instead of $v_i$. Moreover, we
will focus on bidirected overlays, i.e., we will ensure that if a peer $v_1\in V$ is connected to another peer $v_2\in V$, denoted by $(v_1,v_2)$,
then also $v_2$ is connected to $v_1$ (i.e., $(v_2,v_1)$). Sometimes we will refer to the two bidirected edges $(v_1,v_2)$ and $(v_2,v_1)$ simply by
$\{v_1,v_2\}$.

We will assume that peers communicate according to a certain pattern. This pattern may be \emph{static} in the sense that
it follows a certain probability distribution; or it may be \emph{dynamic} and change arbitrarily over time. Static communication patterns
may conveniently be represented as a weighted directed graph $\mathcal{G}=(V,E)$: any peer pair $(v_1,v_2)$ communicating with a non-zero probability
is connected in the graph $\mathcal{G}$.

We will sometimes refer to the sequence of communication events between peers as \emph{communication requests} $\sigma$. In the static case,
 we want the overlay $\mathcal{H}$ be as similar as possible to the communication pattern $\mathcal{G}$ (implied by $\sigma$), in the sense
  that an edge $e\in E(\mathcal{G})$ is represented by a short route in $\mathcal{H}$; this can be seen as a graph embedding problem of
  $\mathcal{G}$ (the ``guest graph'') into $\mathcal{H}$ (the ``host graph'').
In the dynamic setting, the topology $\mathcal{H}$ can be adapted over time depending on $\sigma$. These topological transformations
should be \emph{local}, in the sense that only a few peers and links in a small subgraph are affected.

Our proposed topology $\Obst(k)$ can be described by a simple graph $\mathcal{H}$ which consists of a set of $k$ \emph{binary search trees (BST)}, for some $k>0$.
\begin{definition}[$\Obst(k)$]
Consider a set $\{T_1,T_2,\ldots,T_k\}$ of $k$ BSTs.
$\Obst(k)$ is an overlay over the peer set $V=\left\{1,\ldots,n\right\}$ where
connections are given by the BST edges, i.e., $E=\bigcup_{i=1}^k E(T_i)$.
\end{definition}

Our topological transformations to adapt the $\Obst(k)$ are \emph{rotations} over individual BSTs: minimal and local
transformations that preserve a BST. Informally, a rotation in a sorted binary search tree changes the
local order of three connected nodes, while keeping subtrees intact. Note that
it is possible to transform any binary search tree into any other
binary search tree by a sequence of local transformations (e.g., by induction over
the subtree roots).


Let $\sigma =
(\sigma_0, \sigma_1 \dots )$ be a sequence of $m$ \emph{requests}. Each request $\sigma_t =(u,v)$ is a pair of a source peer and a destination peer. Let $\A$ be an algorithm that given the request $\sigma_t$ and the graph $\mathcal{H}_t$ at time $t$, transforms the current graph (via local
transformations) to $\mathcal{H}_{t+1}$ at time $t+1$. We will use $\STAT$ to refer to an any static (i.e., non-adjusting) ``algorithm'' which does not
change the communication network over time; however, $\STAT$ is initially allowed to choose an overlay
which reflects the statistical communication pattern.

The cost of the network transformations at time $t$ are denoted by $\rho(\A,
\mathcal{H}_t, \sigma_t)$ and capture the number of rotations performed to
change $\mathcal{H}_t$ to $\mathcal{H}_{t+1}$; when $\A$ is clear from the context, we
will simply write $\rho_t$. We denote with $\d_{\mathcal{H}}(\cdot)$ the distance function between nodes in $\mathcal{H}$, i.e., for two nodes $v, u
\in V$ we define $\d_{\mathcal{H}}(u,v)$ to be the number of edges of a
\emph{shortest} path between $u$ and $v$ in $\mathcal{H}$. (The subscript $\mathcal{H}$ is optional if clear from the context.)
Note that for a BST $T$, the shortest path between $u$ and $v$ is unique and can be found and routed locally via a greedy algorithms.

For a given sequence of communication requests, the cost for an
algorithm is given by the number of transformations and
the distance of the communication requests.
Formally, we will make use of the following standard definitions (see also~\cite{ipdps13stefan}).
\begin{definition}[\bf \emph{Average and Amortized Cost}]
For an algorithm $\A$ and given an initial network $\mathcal{H}_0$ with node distance function
$\d(\cdot)$ and a sequence $\sigma=(\sigma_0,
\sigma_1 \dots \sigma_{m-1})$ of communication requests over time,
we define the \emph{(average) cost} of $\A$ as:
$\Cost( \A, \mathcal{H}_0, \sigma) = $ $\frac{1}{m}\sum_{t=0}^{m} (\d_{\mathcal{H}_t}(\sigma_t)+1 $ $+
\rho_t)
$
The \emph{amortized
cost} of $\A$ is defined as the worst possible cost of $\A$, i.e.,
$\max_{\mathcal{H}_0,\sigma}\Cost( \A, \mathcal{H}_0, \sigma)$.
\end{definition}

One may consider two different routing models on $\Obst$. (We will review how to do local routing in BSTs in Section~\ref{sec:background}.) In the first model, two peers will always communicate along a \emph{single} BST: one which minimizes
the hop length; the best BST may be found, e.g., via a probe message along the trees: the first response is taken. In the second model, we
allow routes to cross different BSTs, and take the globally shortest path; this can be achieved, e.g., by using a standard routing protocol
(e.g., distance vector) in the background. In the following, if not stated differently, we will focus on the first model, which is more
conservative in the sense that it yields higher costs.

\section{Background on BSTs and Splay Trees}\label{sec:background}

The following facts are useful in the remainder of this paper.
Theorem~\ref{static_lookup_lower_single} bounds the lookup cost
in an optimal binary search tree under a given \emph{lookup} sequence $\sigma$: a sequence of requests all originating
from the root of the tree.
\begin{theorem}[\cite{mehlhorn1975nearly}] \label{static_lookup_lower_single}
Given $\sigma$, for any (optimal) BST $T$, the amortized cost is at least
\begin{align}\label{eq:lower}
\Cost( \STAT, T, \sigma) \ge \frac{1}{\log 3} H(\hat{Y})
\end{align}
\noindent where $\hat{Y}(\sigma)$ is the empirical measure of the frequency distribution of $\sigma$ and  $H(\hat{Y})$ is its empirical entropy.
\end{theorem}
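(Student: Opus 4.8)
The plan is to reduce this to the classical information‑theoretic lower bound on the expected length of a prefix code, but over a \emph{ternary} alphabet. First I would unwind the definitions in the static case: since $\STAT$ performs no rotations, $\rho_t=0$ for every $t$, and since in the lookup problem every request $\sigma_t$ originates at the root of $T$, the distance $\d_{\mathcal{H}_t}(\sigma_t)$ is exactly the depth $d_i$ of the requested key $v_i$ in $T$. Grouping the $m$ requests by their target and writing $\hat Y_i$ for the empirical frequency of requests to $v_i$, the average cost collapses to a weighted path length, $\Cost(\STAT,T,\sigma)=\frac1m\sum_t(\d_{\mathcal{H}_t}(\sigma_t)+1+\rho_t)=\sum_{i=1}^n \hat Y_i\,(d_i+1)$. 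So it suffices to prove the purely combinatorial claim: for the depth sequence $(d_i)$ of \emph{any} BST and any probability vector $(\hat Y_i)$, one has $\sum_i \hat Y_i(d_i+1)\ge H(\hat Y)/\log 3$.

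Second, I would encode search paths as ternary codewords. A search for $v_i$ visits $d_i+1$ nodes: at each of the first $d_i$ it makes a left/right decision, and at the last one it reports ``found''. Encode this as a word $w_i$ over the alphabet $\{\mathrm L,\mathrm R,\mathrm F\}$ consisting of $d_i$ symbols in $\{\mathrm L,\mathrm R\}$ followed by a single $\mathrm F$, so $|w_i|=d_i+1$. (Equivalently: give every node of $T$, besides its two subtrees, a third child which is a leaf standing for that node's own key; this turns $T$ into a ternary tree whose $n$ leaves are precisely the keys, the leaf of $v_i$ sitting at depth $d_i+1$.) Since $\mathrm F$ occurs only as the last symbol of each word and a root‑to‑node path is determined by its sequence of turns, no $w_i$ is a prefix of another; the code is prefix‑free over an alphabet of size $3$, so Kraft's inequality yields $\sum_{i=1}^n 3^{-(d_i+1)}\le 1$.

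Third, I would invoke Shannon's source‑coding bound. Whenever nonnegative integers $\ell_i$ satisfy $\sum_i 3^{-\ell_i}\le 1$, setting $q_i=3^{-\ell_i}/Z$ with $Z=\sum_j 3^{-\ell_j}\le 1$ makes $q$ a probability vector, and then $\sum_i \hat Y_i\ell_i = H_3(\hat Y)+D(\hat Y\,\|\,q)-\log_3 Z$, which is at least $H_3(\hat Y)$ since relative entropy is nonnegative and $Z\le 1$; here $H_3$ is the base‑$3$ entropy. Taking $\ell_i=d_i+1$ and using $H_3(\hat Y)=H(\hat Y)/\log 3$ (with $H$ and $\log$ in bits) gives the claimed bound. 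Nothing in the argument used optimality of $T$, so it holds for every BST, and in particular for the optimal one.

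I expect the only real subtlety to be getting the alphabet size right, which is exactly what produces the constant $\log 3$ rather than a bare $\log$. The naive encoding of a path by its $d_i$ left/right turns is \emph{not} prefix‑free --- the word of an ancestor's key is a prefix of the word of every key in its subtree --- and it is precisely the extra ``found'' symbol, i.e.\ the ``$+1$'' that the cost model already bakes in, that simultaneously restores prefix‑freeness and forces a ternary instead of binary alphabet, turning $\sum_i 2^{-\ell_i}\le 1$ into $\sum_i 3^{-\ell_i}\le 1$ and the Shannon bound from base $2$ to base $3$. The remainder is the standard Gibbs/Kraft computation and requires no further ideas.
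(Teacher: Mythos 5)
Your proof is correct: the reduction to weighted path length $\sum_i \hat Y_i(d_i+1)$, the ternary prefix code (left/right/found) with Kraft's inequality $\sum_i 3^{-(d_i+1)}\le 1$, and the Gibbs/Shannon step are exactly the classical argument behind Mehlhorn's bound, and you correctly identify that the ``$+1$'' in the cost model is what makes the code prefix-free and the alphabet ternary, producing the $\log 3$. The paper itself offers no proof of this statement --- it is imported verbatim from Mehlhorn's 1975 paper --- and your argument is essentially the standard proof given there, so there is nothing to reconcile.
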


Knuth \cite{knuth1971optimum} fist gave an algorithm to find optimal BST, but Mehlhorn \cite{mehlhorn1975nearly} proved that a simple greedy algorithm is near optimal with an explicit bound:
\begin{theorem}[\cite{mehlhorn1975nearly}] \label{static_lookup_upper_single}
Given $\sigma$, there is a BST, $T_{\mathrm{balanced}}$ that can be computed using a balancing argument and has the amortized cost that is at most
\begin{align}\label{eq:upper_single}
\Cost( \STAT, T_{\mathrm{balanced}}, \sigma)
\le 2+ \frac{H(\hat{Y})}{1-\log (\sqrt{5}-1)}
\end{align}
\noindent where $\hat{Y}(\sigma)$ is the empirical measure of the frequency distribution of $\sigma$ and  $H(\hat{Y})$ is its empirical entropy.
\end{theorem}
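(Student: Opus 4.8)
The plan is to give an explicit candidate for $T_{\mathrm{balanced}}$ — the \emph{weight-balanced} BST — and to bound its lookup cost level by level against the empirical entropy. Write $p_i=\hat Y(i)$ for the empirical frequency of peer $i$, so $\sum_i p_i=1$, and build the tree recursively: to build the subtree spanning a contiguous key range $[a,b]$ of total weight $W=\sum_{i=a}^{b}p_i$, take as its root the \emph{weight bisector}, the key $k\in[a,b]$ with $\sum_{i=a}^{k-1}p_i<W/2\le\sum_{i=a}^{k}p_i$, and recurse on $[a,k-1]$ and $[k+1,b]$ (zero-weight peers are placed by the same recursion, breaking ties consistently). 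Such a $k$ always exists, the rule is local and BST-preserving, and the whole tree is computable in $O(n\log n)$ time from the prefix sums of $(p_i)$ — this is the ``balancing argument'' referred to in the statement.

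The core step is a depth bound. Fix a positive-weight peer $i$, follow the path $\mathrm{root}=u_1,\dots,u_{\ell(i)}=i$ where $\ell(i)$ is the level of $i$ (root at level $1$), and let $W_1=1\ge W_2\ge\cdots\ge W_{\ell(i)}\ge p_i$ be the total weights of the subtrees rooted at $u_1,\dots,u_{\ell(i)}$ (each contains $i$, hence is positive and the bisector is well defined there). By the bisector rule the left subtree of any node has weight $<W/2$ and the right subtree has weight $\le W/2$, so $W_{j+1}\le W_j/2$ for every $j$, whence $p_i\le W_{\ell(i)}\le 2^{-(\ell(i)-1)}$, i.e.\ $\ell(i)\le 1+\log(1/p_i)$. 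Since $\STAT$ never rotates ($\rho_t=0$) and a lookup of peer $i$ costs $\d_{T}(\mathrm{root},i)+1=\ell(i)$, we get $\Cost(\STAT,T_{\mathrm{balanced}},\sigma)=\sum_i p_i\,\ell(i)\le 1+\sum_i p_i\log(1/p_i)=1+H(\hat Y)$, which is at most $2+\frac{H(\hat Y)}{1-\log(\sqrt{5}-1)}$ since $1-\log(\sqrt{5}-1)=\log\phi\in(0,1)$ for $\phi=(1+\sqrt{5})/2$.

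Two points need care. The routine one is handling the degenerate cases of the bisector — the weight-median hitting a key boundary, empty subtrees, all-zero-weight ranges — which is why the depth bound is stated only for positive-weight peers, zero-weight subtrees being irrelevant to the cost and free to be shaped arbitrarily. The substantive obstacle is recovering the theorem in Mehlhorn's stated generality, where one also charges \emph{unsuccessful} lookups (to external/gap nodes): there the one-step halving fails, because when no key lies near the weight-median of the current range one child must retain more than half the weight. The fix is to show that over any such ``bad'' level \emph{together with the next level} the subtree weight drops by a factor at least $\phi$ — a Fibonacci-type estimate on consecutive subtree weights — and it is exactly this amortized two-level bound that produces the constant $1/(1-\log(\sqrt{5}-1))=1/\log\phi\approx 1.44$ instead of a crude factor $2$; tracking the slack of this estimate near the top of the recursion, where $W$ is close to $1$ and the logarithmic bound is vacuous, is what yields the additive $2$.
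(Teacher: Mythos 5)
The paper itself does not prove this statement---it imports it verbatim from Mehlhorn's 1975 paper---so your proposal should be judged as a self-contained reproof, and as such it is correct. Your construction (recursively rooting each key range at its weight bisector) is exactly the ``balancing argument'' the citation refers to, but your analysis is simpler and, in this paper's setting, even stronger: because every request in $\sigma$ is a lookup of an actual peer (no unsuccessful searches charged to gaps), the chosen root always has positive weight and lies at the weight median of its range, so both child subtrees carry at most half the weight, the one-step halving $W_{j+1}\le W_j/2$ is valid, and you get $\Cost(\STAT,T_{\mathrm{balanced}},\sigma)\le 1+H(\hat{Y})$, which trivially implies the stated bound since $1-\log(\sqrt{5}-1)=\log\phi<1$. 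Mehlhorn's constant $1/\log\phi$ and the additive $2$ are only needed in his more general model with gap (unsuccessful-search) weights, where a heavy gap near the weight median cannot serve as a root and one-step halving can fail; you correctly identify this and sketch---but do not carry out---the two-level Fibonacci-type estimate that repairs it. That omission is harmless for the statement as used here, since the paper's lookup model never involves gap weights; what your route buys is a fully elementary proof (with a better constant) of the special case the paper actually needs, whereas the paper's citation buys the theorem in Mehlhorn's full generality.
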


Sleator and Tarjan were able to show that
splay trees, a self-adjusting BST with an algorithm which we denote $\SP$, yields the same amortized cost as an optimal binary search tree.
\begin{theorem}[Static Optimality Theorem \cite{sleator1985self} - rephrased]
Let  $\sigma$ be a sequence of lookup requests
where each item is requested at least once, then for any initial tree $T$
$\Cost(\SP, T, \sigma) = O(H(\hat{Y}))$ where $H(\hat{Y})$ is the empirical entropy of $\sigma$.
\end{theorem}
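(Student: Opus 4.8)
The plan is to reproduce the classical potential-function (``access lemma'') argument for splay trees, instantiated with weights equal to the empirical frequencies of $\sigma$. Let $m_i$ be the number of requests to item $i$ in $\sigma$, put $p_i = m_i/m$ (so $\hat{Y}$ is the distribution with probabilities $p_i$ and $H(\hat{Y}) = \sum_i p_i \log(1/p_i)$), and assign each node $i$ the static weight $w(i) = p_i$, so that $\sum_i w(i) = 1$. For a search tree $T$ and a node $x$, let $s(x)$ be the total weight of the descendants of $x$ (including $x$ itself), let the rank be $r(x) = \log s(x)$, and let the potential be $\Phi(T) = \sum_x r(x)$. Since $w(i) \le 1$ for all $i$ and $s(\mathrm{root}) = \sum_i w(i) = 1$, we have $\Phi(T) \le 0$ for every tree, while $\Phi(T) \ge \sum_i \log w(i)$.

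The first and main step is the \textbf{Access Lemma}: when $\SP$ serves a request to item $x$ and splays $x$ to the root, the \emph{amortized} cost of that operation --- the number of rotations, plus the search distance, plus the net change in $\Phi$ --- is at most $3\big(r(\mathrm{root}) - r(x)\big) + 1 \le 3\log\!\big(1/w(x)\big) + 1$, using $r(\mathrm{root}) = 0$ and $r(x) = \log s(x) \ge \log w(x)$. I would prove this by the standard case analysis of a single splay step, distinguishing the zig, zig-zig and zig-zag cases: in each case the amortized cost of the step is bounded by $3$ times the increase of $r(x)$, with an extra additive $1$ incurred only in the single terminal zig step; summing along the path from $x$ up to the root telescopes to the claimed bound. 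The zig-zig case is the technical heart and, I expect, the main obstacle: bounding it requires the concavity of $\log$, used in the form $\log a + \log b \le 2\log\big(\tfrac{1}{2}(a+b)\big)$ applied to the subtree weight of $x$ before the step and the subtree weight of $x$'s (former) grandparent after the step, whose subtrees are disjoint and jointly contained in $x$'s subtree afterwards; getting this inequality and the constants exactly right is the delicate part.

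Granting the Access Lemma, the remainder is bookkeeping. Summing the amortized bound over all $m$ requests gives a total amortized cost of at most $\sum_t \big(3\log(1/w(\sigma_t)) + 1\big) = m + 3\sum_i m_i \log(1/p_i) = m\big(1 + 3H(\hat{Y})\big)$. The actual total cost exceeds the total amortized cost by exactly $\Phi(T_0) - \Phi(T_{\mathrm{final}})$, and this is at most $0 - \sum_i \log w(i) = \sum_i \log(1/p_i) \le \sum_i m_i \log(1/p_i) = m\, H(\hat{Y})$ (using $m_i \ge 1$), \emph{regardless of the initial tree} $T_0$ --- which is exactly where the ``for any initial tree $T$'' in the statement is used. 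Hence the actual total cost is $O\big(m(1 + H(\hat{Y}))\big)$, so the average cost satisfies $\Cost(\SP, T, \sigma) = O\big(1 + H(\hat{Y})\big)$; absorbing the harmless additive constant gives the stated $O(H(\hat{Y}))$. Finally, comparing this upper bound with the lower bound of Theorem~\ref{static_lookup_lower_single} shows it is optimal up to constant factors, which is the ``static optimality'' content of the theorem.
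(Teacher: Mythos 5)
Your proposal is correct: it is the classical Sleator--Tarjan potential-function argument (Access Lemma with weights $w(i)=m_i/m$, telescoping, and bounding the potential drop by $\sum_i \log(1/p_i)\le m\,H(\hat{Y})$ using that every item is requested at least once), which is precisely the proof in the cited reference; the paper itself states this theorem only as background and gives no proof of its own. The only cosmetic point is that the argument really yields $O(1+H(\hat{Y}))$ amortized cost, and the absorption of the additive constant into $O(H(\hat{Y}))$ is the same slight informality already present in the paper's rephrasing.
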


In \cite{ipdps13stefan}, Avin et al.~proposed a \emph{single} dynamic splay BST for routing, and a double splay algorithm
we denote as $\DS$. For the single tree case and any initial tree $T$ the authors proved the following lower bound for $\STAT$:
 $$\Cost( \STAT, T, \sigma) = \Omega(H(\hat{Y} | \hat{X}) +  H(\hat{X} | \hat{Y}))$$
and the following upper bound for $\DS$:
 $$\Cost( \DS, T, \sigma) = O(H(\hat{X}) + H(\hat{Y}))$$
where $\hat{X}$ and $\hat{Y}$ are the empirical measures of the frequency distribution of the sources and destinations from $\sigma$, respectively and $H$ is the entropy function.

Finally, it is easy to see that BSTs support simple and local routing. For completeness, let us review the proof from~\cite{ipdps13stefan} (adapted to our terminology).
\begin{claim}\label{clm:locrouting}
BSTs support local routing.
\end{claim}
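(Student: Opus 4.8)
The plan is to show that in a binary search tree $T$, a message destined for a target peer $v$ can be forwarded using only information available at the current node together with the keys of its (constantly many) neighbors — parent, left child, right child — so that no global routing tables are needed. First I would recall that a BST over the key set $\{1,\ldots,n\}$ has the search-tree property: for any node $w$ with key $k_w$, every key in its left subtree is smaller than $k_w$ and every key in its right subtree is larger. This is the structural invariant the routing rule exploits.

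Next I would describe the greedy forwarding rule itself. Suppose a packet is currently at node $w$ and must reach node $v$ with key $k_v$. If $k_v = k_w$, the packet has arrived. Otherwise, node $w$ inspects whether $k_v$ lies in the subtree rooted at $w$: this can be decided locally because $w$ knows its own position relative to its parent (i.e., whether $w$ is a left or right child, hence a one-sided bound on the keys in its subtree), or more simply the packet can carry a single bit / the search can start at the root. Concretely: if $k_v < k_w$ and $w$ has a left child, forward to the left child; if $k_v > k_w$ and $w$ has a right child, forward to the right child; otherwise (the target is not in $w$'s subtree) forward to the parent of $w$. I would then argue termination and correctness: walking up toward the root we eventually reach the lowest common ancestor of $w$ and $v$, from which point the search-tree property guarantees that the ``descend toward $k_v$'' steps strictly make progress and never overshoot, so the walk reaches $v$. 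Hence the route taken is exactly the unique tree path between $w$ and $v$, matching $\d_T(w,v)$, and every forwarding decision used only the keys of $w$ and its at most three neighbors.

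The only mild subtlety — and the step I would treat with slightly more care — is justifying that a node can tell \emph{locally} whether the destination lies within its own subtree, since a BST node does not a priori know the full key-interval spanned by its subtree. The clean fix is to have each node store the interval $(\ell_w, r_w)$ of keys covered by its subtree (each node can compute this from its parent's interval and its own key when it joins, so it is still local state of constant size), or, even more simply, to always route from the root: then only the ``descend'' case ever occurs and correctness is immediate from the search-tree property by induction on the depth. Either way the forwarding rule is local and the claim follows; this matches the argument in~\cite{ipdps13stefan}, which we have merely restated in our terminology.
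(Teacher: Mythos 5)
Your argument is correct and essentially the same as the paper's: the ``clean fix'' you settle on --- each node storing the range of identifiers spanned by its subtree --- is exactly the mechanism in the paper's proof, where each peer $u$ keeps the smallest and largest identifiers present in its subtree $T(u)$, forwards down when the destination lies in that range, and forwards to its parent otherwise. Do note that the preliminary rule ``if $k_v<k_w$ go left'' is not sufficient on its own (nor is the one-sided bound from knowing whether $w$ is a left or right child), and that ``always route from the root'' does not apply in this routing model since requests originate at arbitrary peers; the stored subtree interval (maintainable locally under rotations, as the paper remarks) is the essential ingredient.
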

\begin{proof}
Let us regard each peer $u$ in the BST $T$ as the root
of a (possibly empty) subtree $T(u)$. Then, a node $u$ simply needs to store the smallest identifier $u'$ and the largest identifier $u''$ currently present in $T(u)$.
This information can easily be maintained, even under the topological transformations performed by our algorithms.
When $u$ receives a packet for destination address $v$, it will forward it as follows: (1) if $u=v$, the packet reached its destination;
(2) if $u'\leq v \leq u$, the packet is forwarded to the left child and similarly, if $u\leq v \leq u''$, it is forwarded to the right child; (3)
otherwise, the packet is forwarded to $u$'s parent.
\end{proof}

%

\section{Static $\Obst(k)$ Optimization for P2P}\label{sec:staticopt}

We will first study static overlay networks which are optimized towards a request distribution given beforehand.
The number of BSTs $k$ is given together with the sequence of communication requests $\sigma=(\sigma_0,\sigma_1,\ldots)$.
The goal is to find the optimal $\Obst(k)$ to minimize $\Cost( \STAT, \Obst(k), \sigma)$.

In \cite{ipdps13stefan} it is was proved that for any $\sigma$, the optimal $\Obst(1)$ can be found in polynomial time.
Here we first provide a new upper bound for the optimal $\Obst(k)$ and show how it can improve with $k$.

For communication requests $\sigma$ let $x_i(\sigma)$ (or for short $x_i$) be the frequency of $v_i$ as a \emph{source} in  $\sigma$, similarly let $y_i$ be the frequency of $v_i$ as a \emph{destination} and $f_{ij}$ be the frequency of the request $(v_i, v_j)$ in $\sigma$.
Define $z_i = (x_i + y_i)/2$ and note that by definition $\sum_1^n z_i = 1$. Let $\hat{Z}$ be a random variable (r.v.) with a probability distribution defined by the $z_i's$.
For any $k$ partition of the requests in $\sigma$ into disjoint sets $S_1, S_2, \dots , S_k$, let $\alpha_1,\alpha_2,\ldots,\alpha_k$ be the frequency
measure of the partition, i.e.,
$\alpha_i = \sum_{(i,j) \in S_i} f_{ij}$.

First we can prove a new bound on the optimal static $\Obst(1)$:
\begin{theorem} \label{static_lookup_upper_1_set}
Given $\sigma$, there exists a $\Obst(1)$ such that:
\begin{align}
\Cost( \STAT, \Obst(1), \sigma)
\le 4+ \frac{2H(\hat{Z})}{1-\log (\sqrt{5}-1)} \nonumber
\end{align}

\noindent where $H(\hat{Z})$ is the entropy of $\hat{Z}$ as defined earlier.
\end{theorem}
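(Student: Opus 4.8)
The plan is to reduce the routing problem on $\Obst(1)$ to the classical lookup problem, so that Theorem~\ref{static_lookup_upper_single} can be applied essentially as a black box. The key observation is that in a single BST $T$, the (unique) path between any two nodes $u$ and $v$ passes through their lowest common ancestor $w$, so $\d_T(u,v) = \d_T(w,u) + \d_T(w,v) \le \d_T(r,u) + \d_T(r,v)$, where $r$ is the root of $T$; that is, the routing distance between $u$ and $v$ is bounded by the sum of the two root-to-node (lookup) depths. Hence, for any request distribution with source/destination frequencies $f_{ij}$,
\begin{align}
\Cost(\STAT, T, \sigma) &= 1 + \sum_{i,j} f_{ij}\, \d_T(v_i,v_j) \nonumber\\
&\le 1 + \sum_{i,j} f_{ij}\bigl(\d_T(r,v_i) + \d_T(r,v_j)\bigr). \nonumber
\end{align}

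Next I would rewrite the right-hand side in terms of the single-node marginals. Collecting the coefficient of $\d_T(r,v_\ell)$ over all pairs, node $v_\ell$ contributes once for every request in which it appears as a source (total frequency $x_\ell$) and once for every request in which it appears as a destination (total frequency $y_\ell$), so the sum equals $\sum_\ell (x_\ell + y_\ell)\,\d_T(r,v_\ell) = 2\sum_\ell z_\ell\, \d_T(r,v_\ell)$, with $z_\ell = (x_\ell+y_\ell)/2$. Therefore $\Cost(\STAT, T, \sigma) \le 1 + 2\sum_\ell z_\ell\, \d_T(r,v_\ell)$. The quantity $\sum_\ell z_\ell\,\d_T(r,v_\ell)$ is (up to the usual $+1$ for counting the root itself) exactly the expected lookup depth in $T$ under the access distribution $\hat Z$. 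So I would now choose $T := T_{\mathrm{balanced}}$, the tree guaranteed by Theorem~\ref{static_lookup_upper_single} for the lookup sequence whose frequency distribution is $\hat Z$: it satisfies $1 + \sum_\ell z_\ell\,\d_T(r,v_\ell) = \Cost(\STAT, T_{\mathrm{balanced}}, \hat Z) \le 2 + \frac{H(\hat Z)}{1-\log(\sqrt5-1)}$. Plugging in,
\begin{align}
\Cost(\STAT, \Obst(1), \sigma) &\le 1 + 2\Bigl(1 + \sum_\ell z_\ell\, \d_T(r,v_\ell)\Bigr) \nonumber\\
&\le 1 + 2\Bigl(2 + \frac{H(\hat Z)}{1-\log(\sqrt5-1)}\Bigr) \nonumber\\
&\le 4 + \frac{2H(\hat Z)}{1-\log(\sqrt5-1)}, \nonumber
\end{align}
which is the claimed bound (I may need to be slightly careful with the additive constant, but $5 \le$ the intermediate expression $\le$ the stated bound holds with a comfortable margin, so tracking the exact $+1$'s from the cost definition is routine bookkeeping).

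The main thing to get right is the alignment of cost conventions: the paper's cost has a $+1$ term ($\d_{\mathcal{H}_t}(\sigma_t)+1$) and Theorem~\ref{static_lookup_upper_single} also implicitly counts the accessed node, so I would state precisely how $\sum_\ell z_\ell\,\d_T(r,v_\ell) + 1$ equals the lookup cost $\Cost(\STAT, T_{\mathrm{balanced}}, \hat Z)$ in the sense of Theorem~\ref{static_lookup_upper_single}, and verify the arithmetic of the additive constants goes through. There is no real obstacle beyond this: the common-ancestor inequality is elementary, the marginalization step is just rearranging a double sum, and the entropy bound is quoted. The one conceptual point worth emphasizing in the write-up is \emph{why} $\hat Z$ (the averaged source/destination marginal) is the right distribution to optimize the lookup tree for — namely because routing cost decomposes symmetrically into a source-depth part and a destination-depth part, and a single tree must serve both roles simultaneously.
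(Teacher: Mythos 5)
Your proposal is correct and takes essentially the same route as the paper's own proof: bound each request by routing via the root (the paper's ``non-optimal strategy''), collect the marginals to get $2\sum_i z_i\,\d_T(r,v_i)$, and apply Theorem~\ref{static_lookup_upper_single} to the distribution $\hat Z$. The only nit is the flagged constant bookkeeping: your final chain as written gives $1+2\bigl(2+\tfrac{H(\hat Z)}{1-\log(\sqrt5-1)}\bigr)=5+\tfrac{2H(\hat Z)}{1-\log(\sqrt5-1)}$, but dropping the redundant inner $+1$ (the per-request $+1$ is already counted once outside, while Theorem~\ref{static_lookup_upper_single} bounds $1+\sum_\ell z_\ell\,\d_T(r,v_\ell)$) yields $3+\tfrac{2H(\hat Z)}{1-\log(\sqrt5-1)}$, comfortably within the stated bound.
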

\begin{proof}
The result follows from Theorem \ref{static_lookup_upper_single} with some modifications.
Consider a tree $T$ and let $\ell_i$ denote the distance of node $i$ from the root. We will assume the following non-optimal strategy:
each request $(i,j)$ is first routed from $i$ to the root and then from the root to $j$.
The amortized cost of $\sigma$ can be written as the sum of $i$ routing to and from the root
 \begin{align}
\sum_{i=1}^n x_i \ell_i + \sum_{i=1}^n y_i \ell_i &= 2\sum_{i=1}^n \frac{(x_i +y_i)}{2} \ell_i =  2\sum_{i=1}^n z_i \ell_i
\end{align}
Now given $z_i$, the problem of finding the tree that minimizes $\sum_{i=1}^n z_i \ell_i$ is exactly the lookup problem of
Theorem \ref{static_lookup_upper_single} and the result follows.
\end{proof}

Consider now the $\Obst(k)$ overlay which consists of $k$ BSTs. Assume again a non-optimal strategy:
we partition $\sigma$ into $k$ disjoint sets of requests $S_1, S_2, \dots , S_k$, and each request is routed on its unique BST.
In each tree we use the previous method, and the messages are routed from the source to the root and from the root to the destination.

We can now prove an upper bound on  $\Obst(k)$  that improves with $k$.

\begin{theorem} \label{static_lookup_upper_k_set}
Given $\sigma$, there exists a $\Obst(k)$ such that:
\begin{footnotesize}
\begin{align}
\Cost( \STAT, \Obst(k), \sigma)
\le 4+ \frac{2H(\hat{Z})-2H(\alpha_1,\alpha_2,\ldots,\alpha_k)}{1-\log (\sqrt{5}-1)} \nonumber
\end{align}
\end{footnotesize}
\noindent where $H(\hat{Z})$ is the entropy of $\hat{Z}$ as defined earlier.
\end{theorem}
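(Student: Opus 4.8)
The plan is to replay the proof of Theorem~\ref{static_lookup_upper_1_set} on each part of a $k$-partition of the request set, and then glue the $k$ resulting per-tree bounds together with one elementary property of entropy.

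First I would fix a $k$-partition $S_1,\dots,S_k$ of $\sigma$ with part frequencies $\alpha_1,\dots,\alpha_k$ (the precise choice of partition is postponed to the last step) and read each $S_\ell$ as a request sequence in its own right. Writing $x_i^{(\ell)},y_i^{(\ell)}$ for the conditional source/destination frequencies of peer $i$ within $S_\ell$, setting $z_i^{(\ell)}=(x_i^{(\ell)}+y_i^{(\ell)})/2$, and letting $\hat Z^{(\ell)}$ be the induced distribution, Theorem~\ref{static_lookup_upper_1_set} applied to $S_\ell$ produces a single BST $T_\ell$ --- a BST on all $n$ peers, with the peers appearing in $S_\ell$ placed by the balancing construction and the remaining peers inserted as ``filler'' that does not affect the weighted depths --- such that $\Cost(\STAT,T_\ell,S_\ell)\le 4+\frac{2H(\hat Z^{(\ell)})}{1-\log (\sqrt{5}-1)}$. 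Let $\Obst(k)=\{T_1,\dots,T_k\}$ and, for the upper bound, route every request of $S_\ell$ inside its own tree $T_\ell$; since the routing model uses the shortest of the $k$ trees, this only over-estimates the true cost. Splitting the average cost of $\sigma$ over the parts --- each accounting for an $\alpha_\ell$ fraction of the requests --- then gives
\begin{align}
\Cost(\STAT,\Obst(k),\sigma)\;\le\;\sum_{\ell=1}^{k}\alpha_\ell\Big(4+\tfrac{2H(\hat Z^{(\ell)})}{1-\log (\sqrt{5}-1)}\Big)\;=\;4+\frac{2\sum_{\ell=1}^{k}\alpha_\ell H(\hat Z^{(\ell)})}{1-\log (\sqrt{5}-1)}.\nonumber
\end{align}

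It then remains to show $\sum_{\ell}\alpha_\ell H(\hat Z^{(\ell)})\le H(\hat Z)-H(\alpha_1,\dots,\alpha_k)$. For this I would pick the partition so that the parts touch pairwise disjoint peer sets --- for instance, let each $S_\ell$ be a union of connected components of the communication graph $\mathcal{G}$ --- so that every peer $i$ is owned by a unique part $\ell$ and $z_i=\alpha_\ell z_i^{(\ell)}$. The grouping (recursivity) property of Shannon entropy then gives the exact identity $H(\hat Z)=H(\alpha_1,\dots,\alpha_k)+\sum_{\ell}\alpha_\ell H(\hat Z^{(\ell)})$, and substituting it into the display above proves the theorem.

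The delicate point will be this last step. For a partition that does \emph{not} respect the peer structure one only obtains $\sum_{\ell}\alpha_\ell H(\hat Z^{(\ell)})=H(\hat Z)-H(\alpha_1,\dots,\alpha_k)+H(L\mid\hat Z)$, where $L$ records which part a request falls into, so the ``$-H(\alpha_1,\dots,\alpha_k)$'' saving materializes only when the part containing a request is determined by the peers that request involves; I would therefore need to argue that a suitable component-respecting partition with the desired frequency vector exists, and, to make the bound as strong as possible, that $H(\alpha_1,\dots,\alpha_k)$ can be pushed as large as the component structure allows. The remaining points --- that routing $(i,j)\in S_\ell$ through the root of $T_\ell$ costs at most the sum of the depths of $i$ and $j$ in $T_\ell$, that averaging these over $S_\ell$ with weights $f_{ij}/\alpha_\ell$ reproduces exactly the weighted-depth sum analyzed in the proof of Theorem~\ref{static_lookup_upper_1_set}, and that the filler peers do not inflate those depths in Mehlhorn's balancing construction --- are routine and go as in Theorems~\ref{static_lookup_upper_single} and~\ref{static_lookup_upper_1_set}.
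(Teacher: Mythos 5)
Your proposal follows essentially the same route as the paper's proof: partition $\sigma$ into $S_1,\dots,S_k$, build one balanced tree per part via Theorem~\ref{static_lookup_upper_1_set} (routing each request of $S_\ell$ through the root of its own tree), average the per-tree bounds with weights $\alpha_\ell$, and finish with the grouping property of entropy. The only difference is that you make explicit the condition the paper leaves implicit in its one-line appeal to the ``decomposition property'' --- namely that the equality $\sum_\ell \alpha_\ell H(\hat{Z}^{(\ell)})=H(\hat{Z})-H(\alpha_1,\dots,\alpha_k)$ requires the partition to be determined by the peers (e.g., unions of connected components of the communication graph), since otherwise an extra $H(L\mid \hat{Z})$ term appears --- which is a correct and useful clarification rather than a different argument.
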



%

\begin{proof}
For a subset $S_i$, $1 \le i \le k$, let $\hat{Z}_i$ denote the frequency measure defined as $\hat{Z}$, but limited to the requests in $S_i$.
Now:
\begin{footnotesize}
\begin{align}
&\Cost( \STAT, \Obst(k), \sigma) \le \sum_{i=1}^k \alpha_i \left(4+2\frac{H(\hat{Z}_i)}{1-\log (\sqrt{5}-1)} \right)\\
&=4+ \frac{2}{1-\log (\sqrt{5}-1)}\sum_{i=1}^k \alpha_i H(\hat{Z}_i)\\
&=4+ \frac{2}{1-\log (\sqrt{5}-1)}\left(H(\hat{Y})-H(\alpha_1,\alpha_2,\ldots,\alpha_k)\right)
\end{align}
\end{footnotesize}
where the last step is based on the decomposition property of entropy.
\end{proof}

Note that this approach can yield a cost reduction of up to $\log k$, when the $\alpha_i$ values are equal. The problem of equally partition $\sigma$ into $k$ sets in order to maximize $H(\alpha_1,\alpha_2,\ldots,\alpha_k)$ is NP-complete, since even the partition problem (i.e., $k=2$) and in particular the balanced partition problem (with $k=2$) are NP-complete \cite{jis1979computers}. Interestingly, for those cases, $k=2$,  a pseudo-polynomial time dynamic programming algorithm exist.

The bound in Theorem~\ref{static_lookup_upper_k_set} is conservative in the sense that sometimes, a single additional BST
can reduce the optimal communication cost of $\Obst(k)$ from worst possible (e.g., $\Omega(\log n)$) to a constant cost in $\Obst(k+1)$.
\begin{theorem}\label{thm:2trees-bad}
A single additional BST can reduce the amortized costs from a best possible value of $\Omega(\log{n})$ to $O(1)$.
\end{theorem}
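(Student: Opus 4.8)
The plan is to exhibit a single request distribution $\sigma$ whose communication graph $\mathcal{G}=(V,E(\mathcal{G}))$ on $V=\{1,\dots,n\}$ can be served at amortized cost $O(1)$ by a carefully chosen static $\Obst(2)$, while \emph{every} static $\Obst(1)$ incurs amortized cost $\Omega(\log n)$; since a balanced BST always achieves amortized cost $O(\log n)$, this is the worst possible for a single tree, which gives exactly the claimed gap. Concretely, I would take $\mathcal{G}$ to be the edge-disjoint union $E(\mathcal{G})=E(T_1)\cup E(T_2)$ of two binary search trees $T_1,T_2$ over $\{1,\dots,n\}$, with $\sigma$ requesting each edge of $\mathcal{G}$ equally often. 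The upper bound is then immediate: with the static overlay $\Obst(2)=\{T_1,T_2\}$, every request $(u,v)\in E(\mathcal{G})$ is an edge of $T_1$ or of $T_2$, hence in the first routing model it is forwarded in a single hop along that tree and no rotations occur, so $\Cost(\STAT,\Obst(2),\sigma)\le 1+1+0=2=O(1)$.

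The heart of the argument is the matching lower bound for $\Obst(1)$. For any BST $T$ we have $\Cost(\STAT,T,\sigma)=1+\frac{1}{|E(\mathcal{G})|}\sum_{(u,v)\in E(\mathcal{G})}\d_T(u,v)$, so it suffices to prove $\sum_{(u,v)\in E(\mathcal{G})}\d_T(u,v)=\Omega(n\log n)$ for every BST $T$. I would design $\mathcal{G}$ so that it has no small balanced separator aligned with the linear order: every value-interval $I=[a,b]$ with $n/3\le|I|\le 2n/3$ has $\Omega(n)$ edges of $\mathcal{G}$ with exactly one endpoint in $I$, and (for the recursion) a proportional bound holds at every scale. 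Given this, the lower bound follows by charging against an arbitrary BST $T$: since every subtree of $T$ spans a value-interval and $T$ has bounded degree, there is a tree edge of $T$ whose removal isolates a subtree spanning a value-interval $I$ with $|I|\in[n/3,2n/3]$; each of the $\Omega(n)$ edges of $\mathcal{G}$ crossing $I$ routes through this $T$-edge, so it contributes to the sum. Recursing inside $I$ and inside $V\setminus I$ (with the scaled separation property), one accumulates $\Omega(\log n)$ levels each contributing $\Omega(n)$, for a total of $\Omega(n\log n)$. Equivalently, one can ask that $\mathcal{G}$ be a bounded-degree expander of arboricity two, in which case the bound is the standard fact that expanders embed into any tree with average distortion $\Omega(\log n)$. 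The same recipe, using a union of $\kappa+1$ BSTs, would give the corresponding $\Obst(\kappa)$ vs.\ $\Obst(\kappa+1)$ separation for any fixed $\kappa$.

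The main obstacle is realizing such a $\mathcal{G}$ \emph{as a union of two BSTs}: a near-balanced BST makes every value-interval cheap (small tree-boundary), so the union of two \emph{balanced} trees still admits cheap interval cuts and cannot be incompressible. The remedy is to use two sufficiently \emph{unbalanced} tree shapes — e.g.\ ``zig-zag'' trees — whose subtree families (the interval families they keep cheap) are mutually incompatible, so that no single BST can be compatible with enough edges of both $T_1$ and $T_2$ simultaneously. Pinning down a concrete such pair (or a random pair from a suitable family) and verifying that its union actually has the required separation/expansion property is the one place where genuine work is needed; the one-hop routing bound for $\Obst(2)$, the reduction of the $\Obst(1)$ lower bound to cut-counting, and the $O(\log n)$ worst-case ceiling are all routine.
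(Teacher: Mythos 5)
Your overall strategy coincides with the paper's: take $\sigma$ uniform over the edges of two BSTs $T_1,T_2$, observe that the static $\Obst(2)=\{T_1,T_2\}$ serves every request in one hop (cost $O(1)$), and prove the $\Omega(\log n)$ lower bound for every single BST by counting requests that cross value-intervals (the paper does this by invoking the interval-cut bound of Theorem~11 in~\cite{ipdps13stefan}, your recursive charging over subtree intervals is the same mechanism unrolled by hand). However, there is a genuine gap at exactly the point you flag yourself: you never exhibit the pair $T_1,T_2$, nor verify that their union has linear cuts across the relevant intervals, and this is the actual content of the paper's proof. The paper constructs two explicit ``laminated'' trees (Figure~\ref{fig:twotrees}): $T_2$ zig-zags over the whole identifier space ($1$ adjacent to $n$, $n$ to $2$, etc.), and $T_1$ consists of two such laminated trees, one on $[1,n/2]$ and one on $[n/2+1,n]$; a short three-case analysis then shows every interval of size between $n/8$ and $n/4$ is crossed by $\Omega(n)$ requests (intervals inside one half are cut by $T_2$, intervals straddling $n/2$ are cut by $T_1$). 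Without such a concrete pair and verification, the lower-bound half of your argument has no object to apply to.

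Moreover, your proposed shortcut --- ``equivalently, let $\mathcal{G}$ be a bounded-degree expander of arboricity two'' --- is not equivalent and would not close the gap. Arboricity two only gives a decomposition into two forests; being a union of two \emph{binary search trees} over the ordered set $\{1,\ldots,n\}$ is far more restrictive: each tree has maximum degree $3$ and, by the non-crossing constraint (Lemma~\ref{lemma:intersected_req}), no two edges of the same BST may intersect in the interval sense, so BST edge sets form laminar-like families over the line. Whether a generic expander (or any graph with the all-scales separation property you posit for your recursion) can be realized as a union of two BSTs is precisely the question the explicit construction must answer; the paper sidesteps the need for expansion or multi-scale cuts altogether by verifying cuts only in the single size band $n/8<\ell<n/4$, which is all the cited interval-cut theorem requires. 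So the upper bound and the cut-counting reduction in your proposal are fine and match the paper, but the proof is incomplete until you supply and analyze a concrete two-BST construction such as the laminated one.
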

The formal proof appears in the appendix. Essentially, it follows from the two BSTs $T_1=(V,E_1)$ and $T_2=(V,E_2)$ shown in Figure~\ref{fig:twotrees}:
obviously, the two BSTs can be perfectly embedded into $\Obst(2)$ consisting of two BSTs as well. However, embedding the two trees at low cost
in one BST is impossible, since there is a large cut in the identifier space. See the proof for details.

\vspace{6pt}

\vspace{3mm}
\begin{figure}[ht]
	\begin{center}%
\includegraphics[width=0.9\columnwidth]{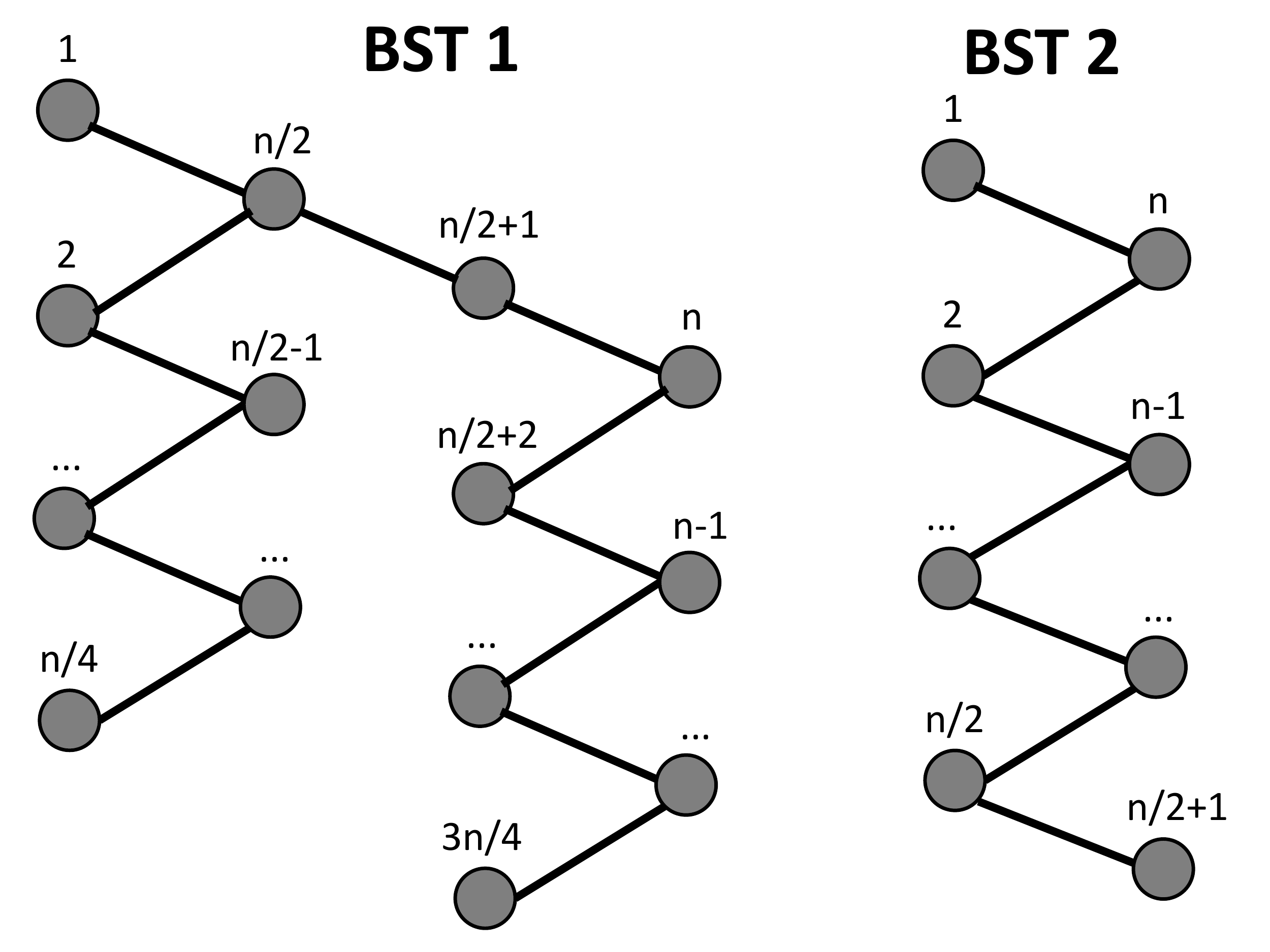}\\%
	\end{center}
\caption{A request sequence $\sigma$ originating from these specific trees can yield
high amortized costs.}\label{fig:twotrees}
\end{figure}
\vspace{3mm}


Interestingly, as we will discuss in Section~\ref{sec:discussion}, such a high benefit from one additional BST is unique to the routing model and does
not exist for classic lookup data structures. Moreover, as we will see in Section~\ref{sec:simulations}, Theorem~\ref{thm:2trees-bad} even holds in a dynamic setting,
i.e., a p2p system can also \emph{converge} to such a bad situation.

\section{Dynamic Self-Adjusting $\Obst(k)$ Overlay}\label{sec:dynopt}

Given our first insights on the performance of static $\Obst(k)$ networks, let us now initiate the discussion
of self-adjusting variants: BSTs which adapt to the demand, i.e., the sequence $\sigma$.

\subsection{Splay Method}

We initialize $\Obst(k)$ as follows: each BST connects \emph{all} peers $V$ as a random and independent binary search tree.

When communication requests occur, BSTs start to adapt. In the following, we will adjust the overlay at each
interaction (``communication event'' or ``request'') of two peers. Of course, in practice, such frequent changes
are undesirable. While our protocol can easily be adapted such that peers only initiate the topological rearrangements after a certain number of interactions
(within a certain time period), in order to keep our model simple, we do not consider these extensions here.

Concretely, we propose a straight-forward \emph{splay} method (inspired from the classical splay trees) to change the $\Obst(k)$:
whenever a peer $u$ communicates with a peer $v$, we perform a distributed splay operation
in \emph{one} of the BSTs, namely in the BST $T$ in which the two communication partners $(u,v)$ are
the topologically closest.

Concretely, upon a communication request $(u,v)$, we determine the BST $T$ (in case multiple
trees yield similar cost, an arbitrary one is taken), as well as the least common ancestor $w$
of $u$ and $v$ in $T$: $w:=\An_{T}(u,v)$. Subsequently, $u$ and $v$ are splayed to the root of the subtree (henceforth denoted by $T(w)$) of $T$ rooted at $w$
(a so-called \emph{double-splay} operation~\cite{ipdps13stefan}).

\vspace{3mm}
\begin{algorithm}[h]
    \caption{Dynamic $\Obst(k)$}
    \label{alg:splaynet}
    \begin{algorithmic}[1]
        \STATE (* upon request $(u,v)$ *)
        \STATE find BST $T\in \Obst$ where $u$ and $v$ are closest;
        \STATE $w:=\An_{T}(u,v)$;
        \STATE $T'$ $:=$ \textbf{splay} $u$ to root of $T(w)$;
	    \STATE \textbf{splay} $v$ to the child of $T'(u)$;
    \end{algorithmic}
\end{algorithm}
\vspace{3mm}

Figure~\ref{fig:sa} gives an example: upon a communication request between peers $v_5$ and $v_{12}$, the two peers are splayed to their least common ancestor, peer $v_7$,
in BST $T_1$.
 \begin{figure}[ht]
	\begin{center}%
\includegraphics[width=0.9\columnwidth]{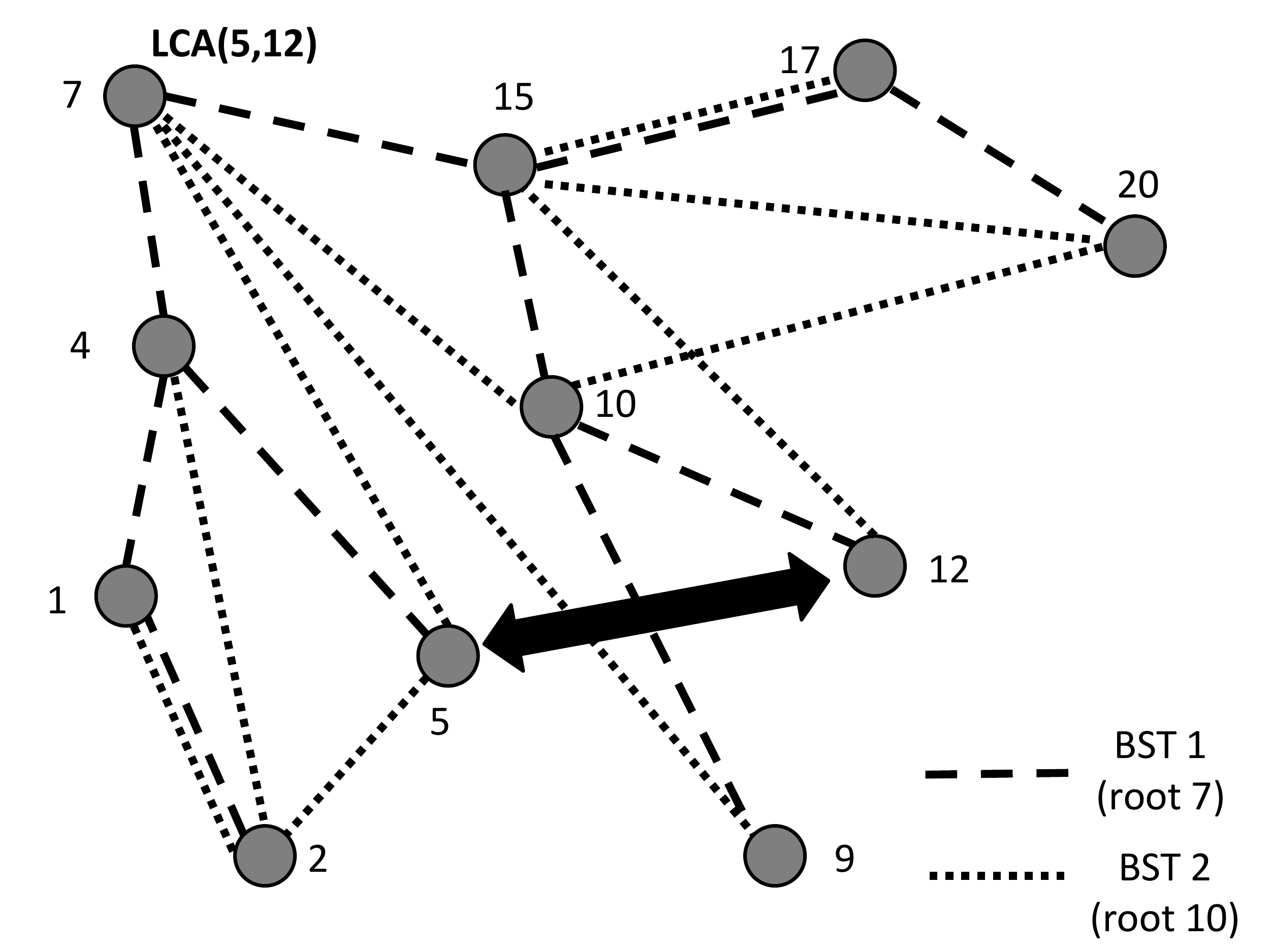}\\%
	\end{center}
\caption{Example for splay operation in $\Obst(2)$ of Figure~\ref{fig:mosts}.}\label{fig:sa}
\end{figure}

\subsection{Joins and Leaves}\label{ssec:joinleave}

Note that $\Obst(k)$ automatically supports joins and leaves of peers.
In order for a peer $v$ to join $\Obst(k)$, it is added as a \emph{leaf} to each BST constituting the overlay (according to the search order).
Similarly for leaving: In order for a peer $v$ to leave, for each BST, $v$ is first swapped with either the rightmost peer of its left subtree (its in-order predecessor) or the leftmost node of its right subtree (its in-order successor); subsequently it is removed. In case of crash failures, a neighboring peer is responsible to detect
that $v$ left and to perform the corresponding operations.

\section{Simulations}\label{sec:simulations}

In order to study the behavior of the self-adjusting $\Obst(k)$, we conducted an extensive simulation study.
In particular, we are interested in how the performance of $\Obst$ depends on the number of BSTs $k$ and the specific communication patterns.

\subsection{Methodology}

We generated the following artificial communication patterns. These patterns were obtained by
first constructing some guest graph $\mathcal{G}$, and then generating $\sigma$ from $\mathcal{G}$.

To model communication patterns, we implemented four guest graphs $\mathcal{G}$:
\\
\noindent
\textbf{(1)} BitTorrent Swarm Connectivity (\textsc{BT}): $\mathcal{G}$ models the p2p connectivity patterns measured by Zhang et al.~\cite{bt-ross}. The
model is taken from~\cite{p2p12stefan} and combines preferential attachment aspects (for swarm popularity) with clustering (for common interest types).
Concretely, the network is divided into a collection of (fully connected) swarms, and the join probability of a peer is proportional to the number of \emph{its neighbors} participating in that swarm.
\\
\noindent
\textbf{(2)} Facebook (\textsc{FB}): $\mathcal{G}$ is a connected subset of the Facebook online social network (obtained from \cite{facebook-viswanath-2009}).
 The graph consists of roughly $63K$ nodes and $800K$ edges. The peer identifiers were chosen according to a breadth first search, starting from the peer with the highest degree (breaking ties at random). For graphs with less then 63k peers ($n<63$k), the subgraph with $n$ peers with the smallest identifiers was extracted.
\\
\noindent
\textbf{(3)} $\textsc{Rnd}(16)$: $\mathcal{G}$ is simply an $\Obst$ with (16) randomly generated BSTs.
\\
\noindent
\textbf{(4)} $\textsc{Bad}(2)$: $\mathcal{G}$ is an $\Obst$ with (two) BSTs generated specifically for the worst routing cost in $\mathcal{H}=\Obst(1)$.

To generate $\sigma$, we used three methods:
\\
\noindent \textbf{(1)} \textsc{Match}: The sequence $\sigma$ is generated from a random maximal matching on $\mathcal{G}$. After each matching edge has been used once,
the next random matching is generated.
\\
\noindent
\textbf{(2)} \textsc{RW-0.5}: The sequence $\sigma$ models a random walked performed on $\mathcal{G}$. Every edge (i.e., request) $(u,v)$ of the random walk is repeated with probability $0.5$; with probability $0.5$, another random request $(v,w)$ is generated.
\\
\noindent
\textbf{(3)} \textsc{RW-1.0}: Like \textsc{RW-0.5}, but without repetitions.

\subsection{Impact of the Number of BSTs}

Let us first study how the routing cost depends on $k$, the number of BSTs in $\Obst(k)$. The routing cost of a communication request $(u,v)$ in $\mathcal{H} = \Obst(k)$ is given by the shortest distance $\d(u,v)$ (among all the $k$ BSTs).
Figure~\ref{fig:fb_pref_ktg16_match} plots the average routing cost for our guest graphs \textsc{BT}, \textsc{FB}, and $\Obst$, under the maximal matching request pattern (\textsc{Match}).
 For each experiment, we generate $n^2$ requests (i.e., $\sigma$ is a sequence of $n^2$ pairs), which is sufficient to explore the performance of $\Obst$ over time. (The variance over different runs is very low.)

We observe that \textsc{BT} typically yields slightly higher costs than \textsc{FB} and especially \textsc{Obst}. As a rule of thumb, doubling $k$ roughly yields a constant additive improvement in the routing cost, in all the scenarios. Interestingly, $\Obst(k)$ is indeed able to perfectly embed $\Obst(k')$ requests with $k'\leq k$:
 $\Obst$ converges to the optimal structure in which every BST of $\mathcal{H}$ serves a specific BST of $\mathcal{G}$.
But even for $k'>k$, $\Obst(k)$ is able to exploit locality and the cost is relatively stable and independent of the size of the p2p system.

\begin{figure*}
\centering
\subfloat[]{\includegraphics[width=0.325\textwidth]{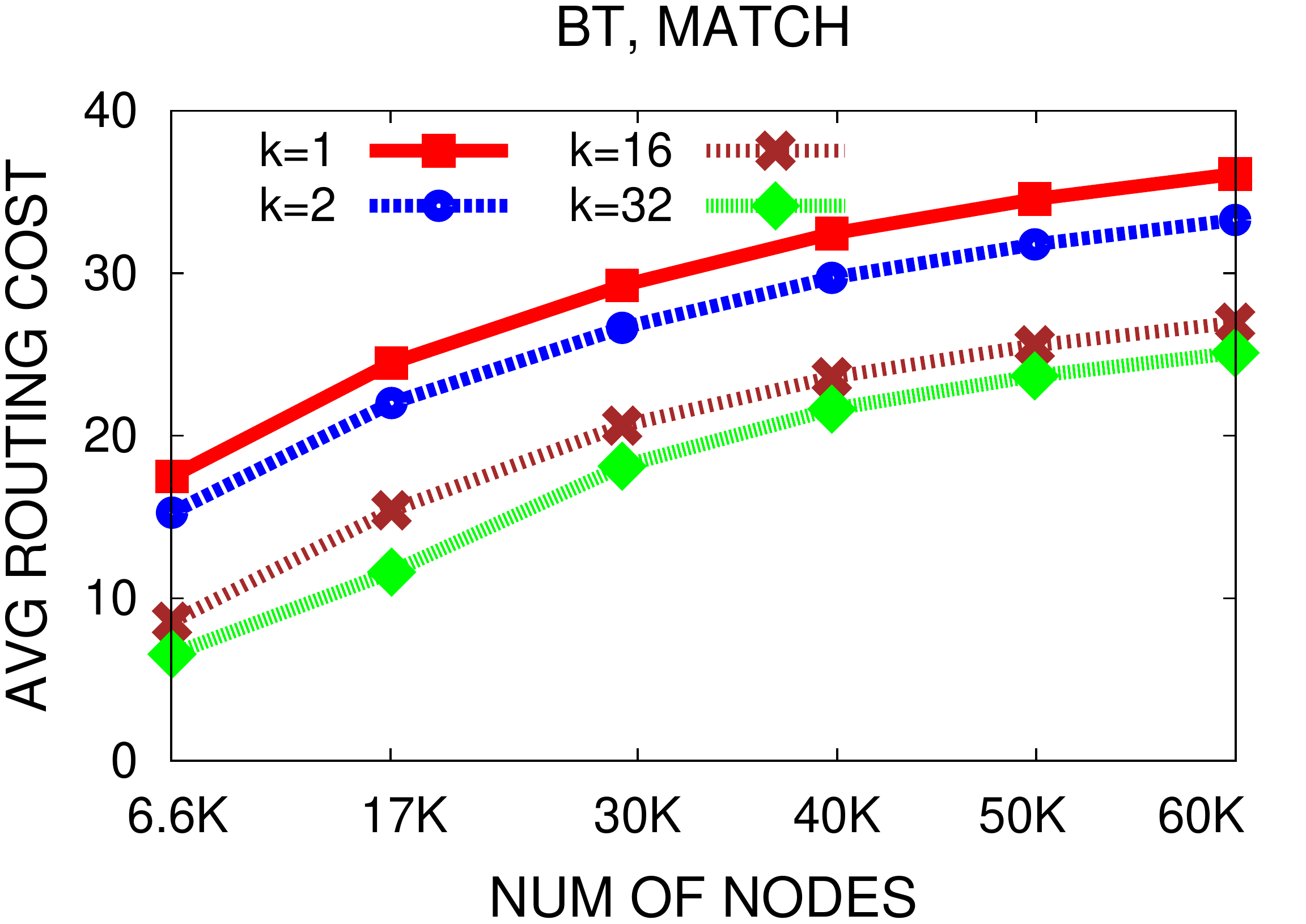}}
\subfloat[]{\includegraphics[width=0.325\textwidth]{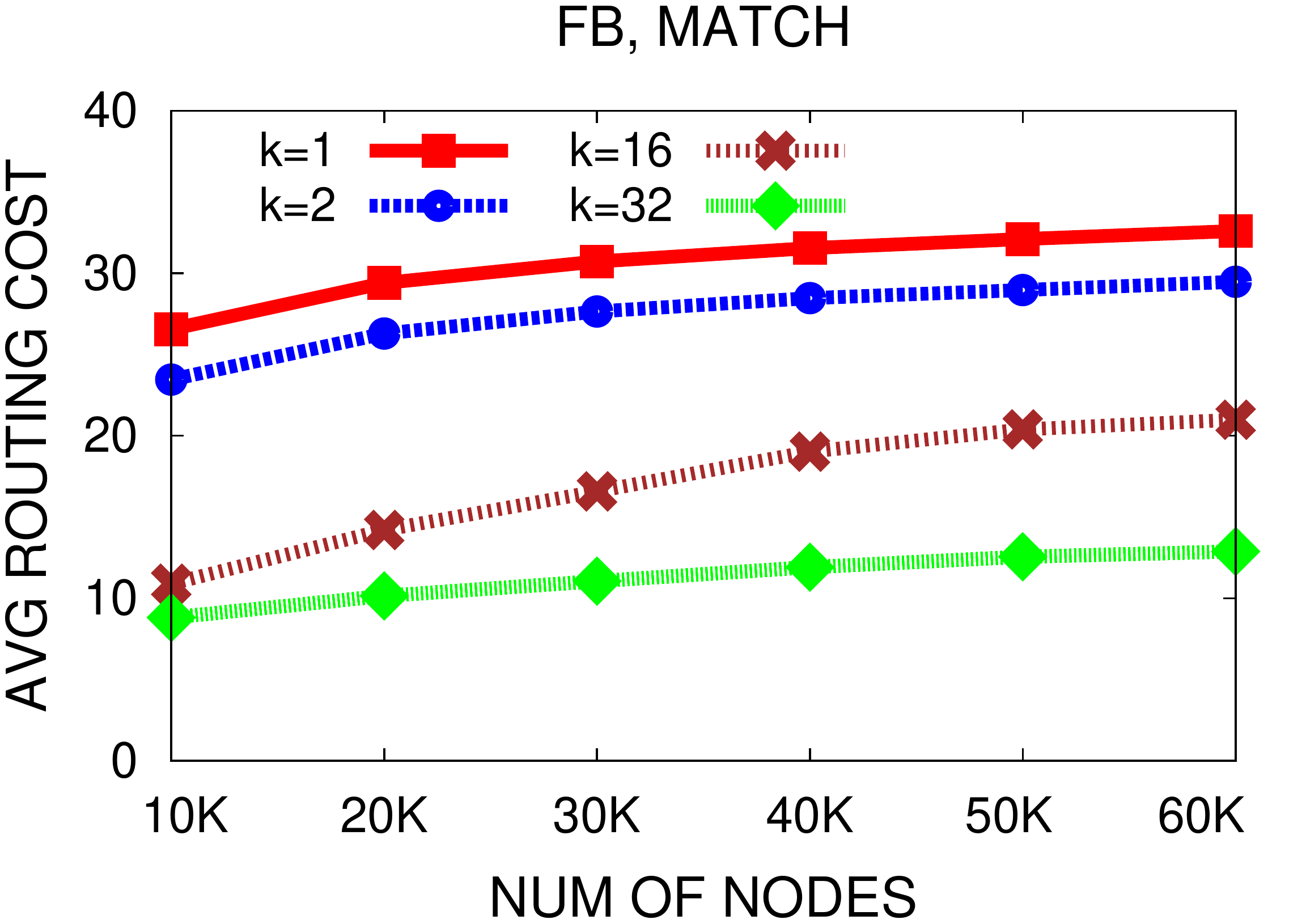}}
\subfloat[]{\includegraphics[width=0.325\textwidth]{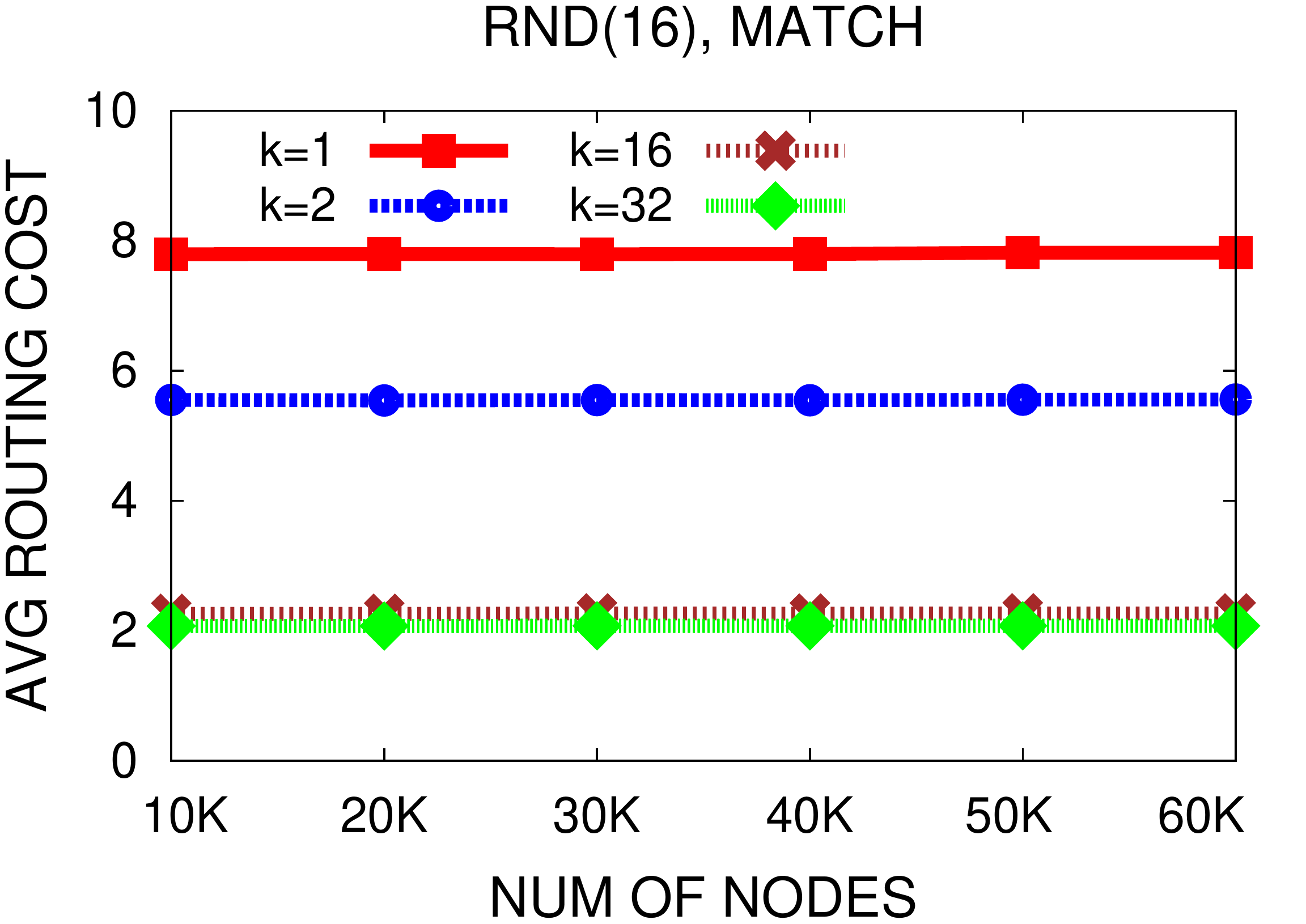}}
\caption{Average routing distance in $\Obst$ as a function of the number of BSTs $k$ and the network size $n$.
The \textsc{Match} method is applied to the graphs from BitTorrent swarms (\textsc{BT}), Facebook (\textsc{FB}),  and $\Obst(16)$.}
\label{fig:fb_pref_ktg16_match}
\end{figure*}

Figure~\ref{fig:worst_requests} shows the routing cost for $\Obst(1)$ and $\Obst(2)$, for a $\textsc{Bad}(2)$ scenario on \textsc{FB}. The figure
confirms Theorem \ref{thm:2trees-bad} and extends it for the dynamic case: $\Obst$ may also converge to a situation as shown to exist in Theorem \ref{thm:2trees-bad},
and a single additional BST improves the routing cost by an order of magnitude. The figure also confirms that multiple BSTs are more useful than in the pure lookup model of Section~\ref{sec:discussion}. Finally, note that unlike Figure~\ref{fig:fb_pref_ktg16_match} (c), the cost is not independent of the network size under this worst-case request pattern.

\begin{figure}
\centering
\includegraphics[width=0.725\columnwidth]{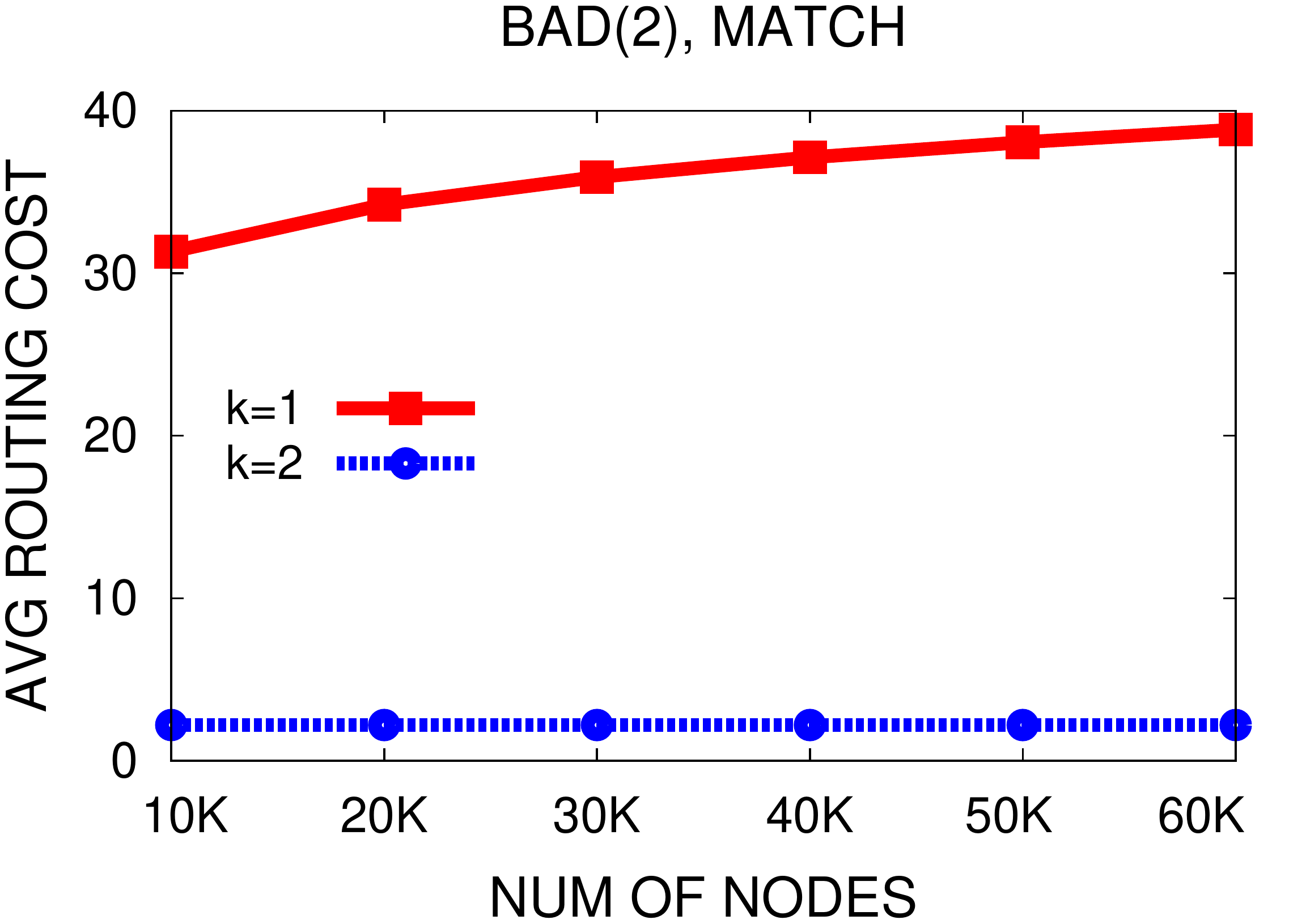}
\caption{Example where one additional BST can significantly improve costs. Scenario $\textsc{Bad}(2)$ with \textsc{Match} on \textsc{FB}.}
\label{fig:worst_requests}
\end{figure}

Let us now compare the alternative request sequences $\sigma$, generated using \textsc{Match}, \textsc{RW-0.5}, and \textsc{RW-1.0} on the Facebook graph $\mathcal{G}=\textsc{FB}$. Figure~\ref{fig:fb_match_and_rw} shows that for $\mathcal{H}=\Obst(1)$, the \textsc{Match} pattern yields the highest cost, but also improves the most for increasing $k$. The \textsc{RW-1.0} generally gives lower costs and as expected, \textsc{RW-0.5} reduces the costs further due to the temporal locality of the communications. Again, all $\Obst(k)$ overlays benefit from higher $k$ values.
\begin{figure}
\centering
\includegraphics[width=0.725\columnwidth]{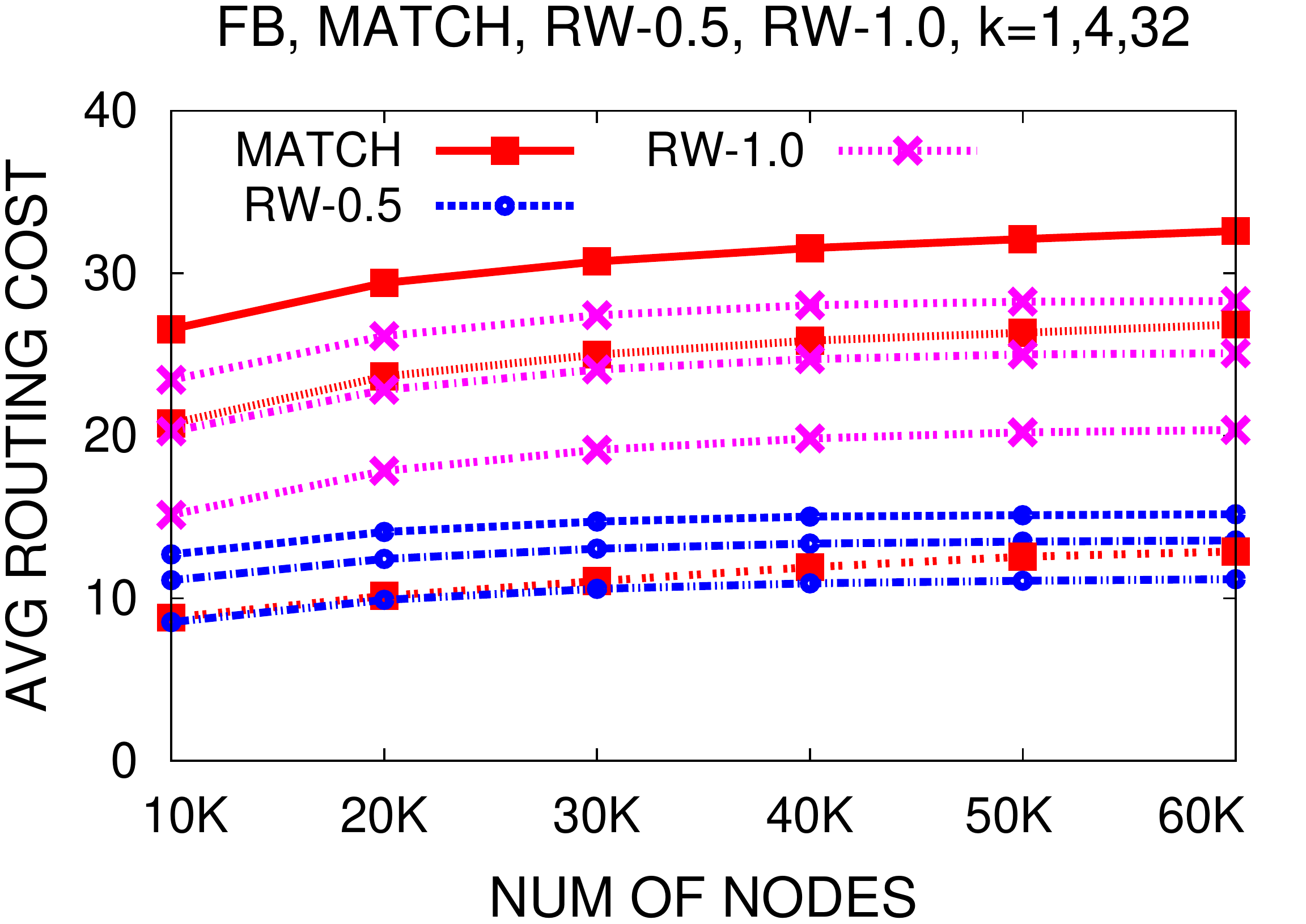}
\caption{Comparison of costs under \textsc{Match}, \textsc{RW-0.5}, and \textsc{RW-1.0} patterns on \textsc{FB} graph. The highest curve of every pattern is for $k=1$, the middle for $k=4$, and the lowest for $k=32$.}
\label{fig:fb_match_and_rw}
\end{figure}

Finally, Figure~\ref{fig:box_plots} studies the evolution of classical topology metrics over time, namely the min edge cut
and the diameter. In general, we observe that $\Obst(k)$ is relatively stable and behaves well also regarding these properties
and even for small $k$.
\begin{figure}
\centering
\includegraphics[width=0.725\columnwidth]{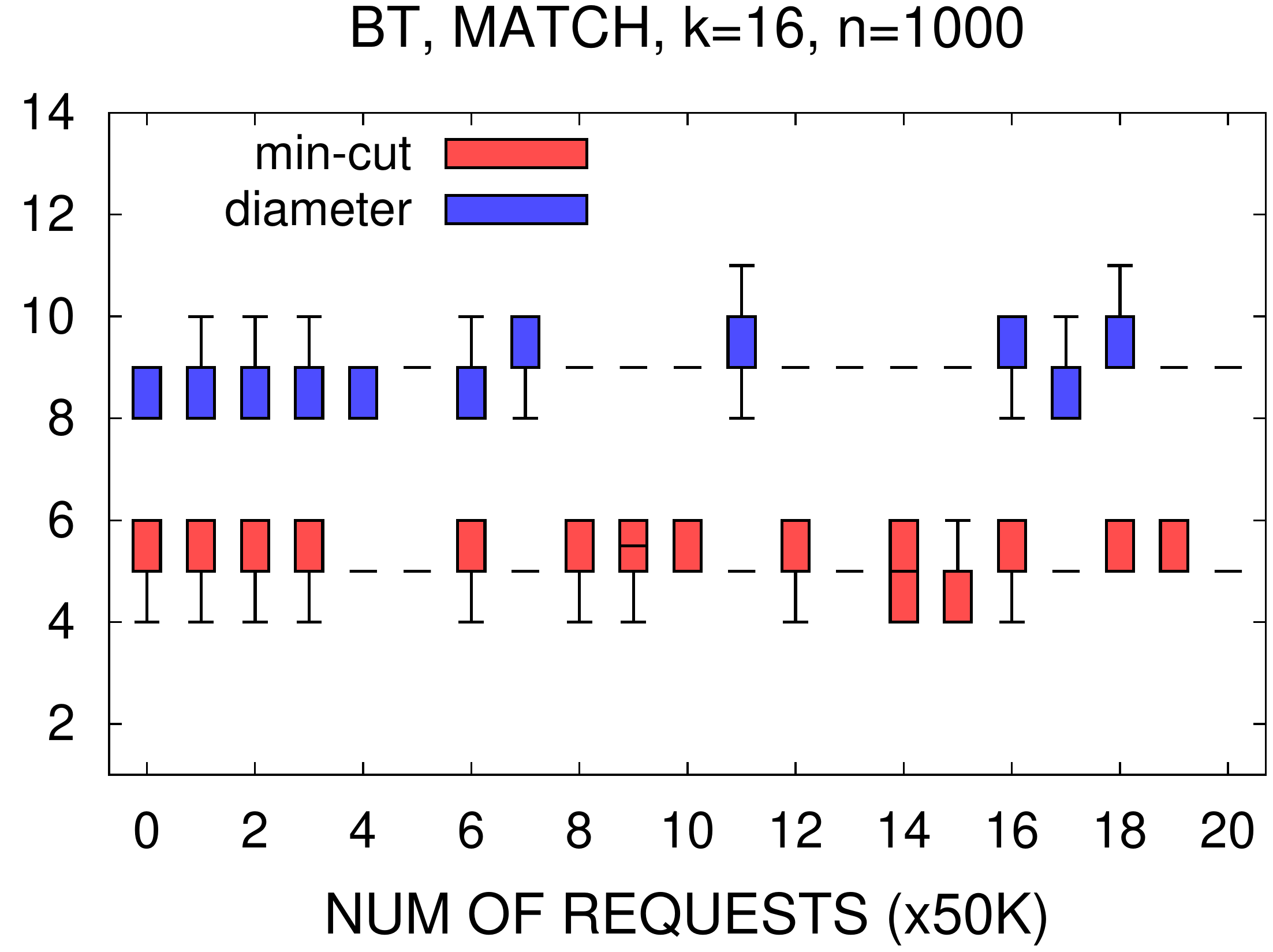}
\caption{Boxplot of diameter and mincut for 1000 peers and over time. (Ten repetitions.)}
\label{fig:box_plots}
\end{figure}
As shown in Figure~\ref{fig:fb_diam_over_time}, the diameter also scales well in the number of peers, although under \textsc{Match} it is slightly
larger than under \textsc{RW} and subject to more variance.
\begin{figure}
\centering
\subfloat[]{\includegraphics[width=0.725\columnwidth]{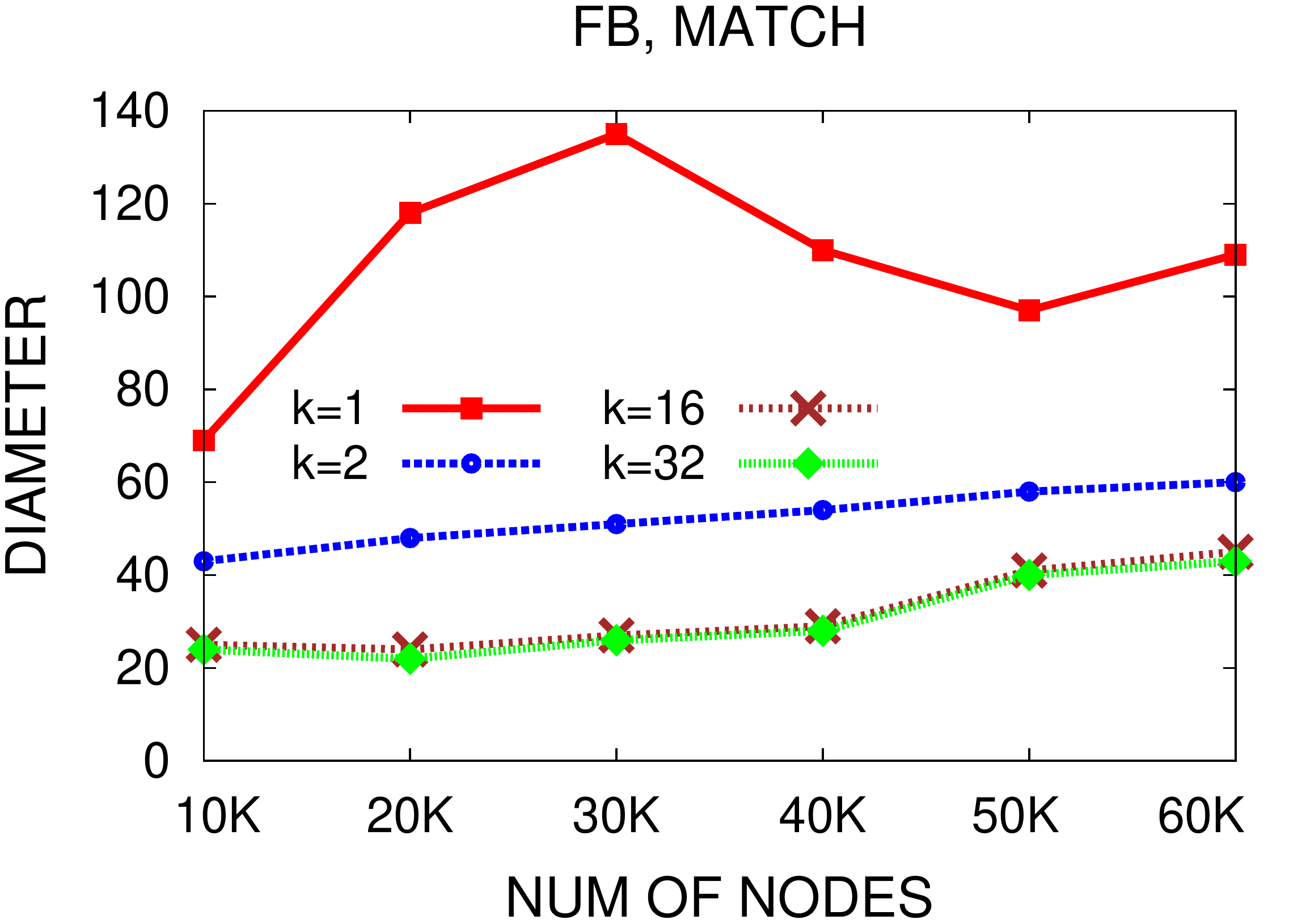}}\\
\subfloat[]{\includegraphics[width=0.725\columnwidth]{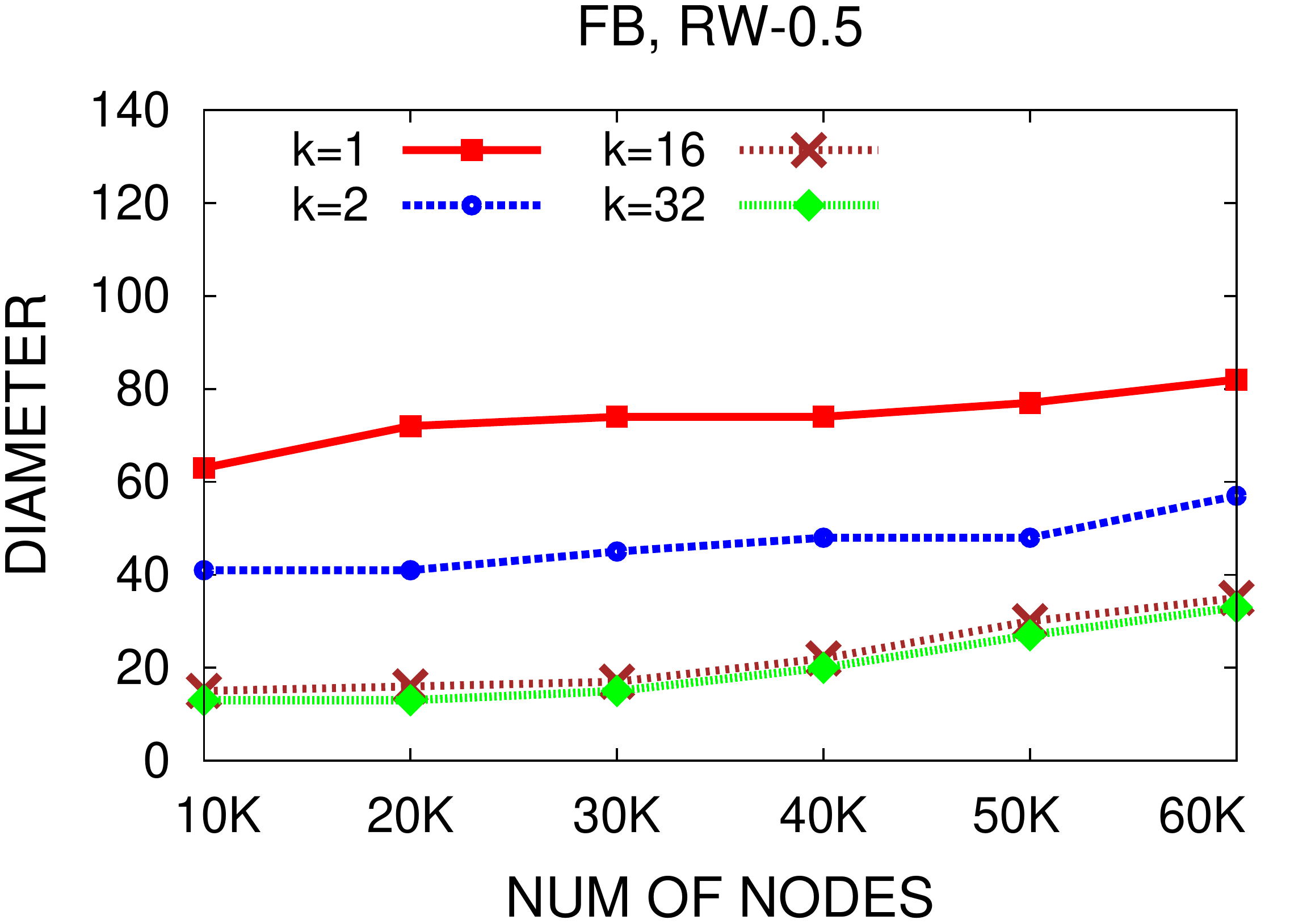}}
\caption{Diameter as a function of network size and for \textsc{Match} and \textsc{RW} pattern.}
\label{fig:fb_diam_over_time}
\end{figure}

However, let us emphasize that $\Obst(k)$ is optimized for amortized routing costs rather than mincut and diameter,
and we suggest using an additional, secondary overlay (e.g., a hypercubic topology) if these criteria are important.

\subsection{Convergence and Robustness}

Initializing $\Obst$ trees at random typically yields relatively low costs from the beginning.
Figure~\ref{fig:cost_plots} shows that the overlay subsequently also adjusts relatively quickly to
the specific demand. (Other scenarios yield similar results.) This indicates that the system
is able to adapt to new communication patterns and/or joins and leaves relatively quickly.

\begin{figure}
\centering
\includegraphics[width=0.725\columnwidth]{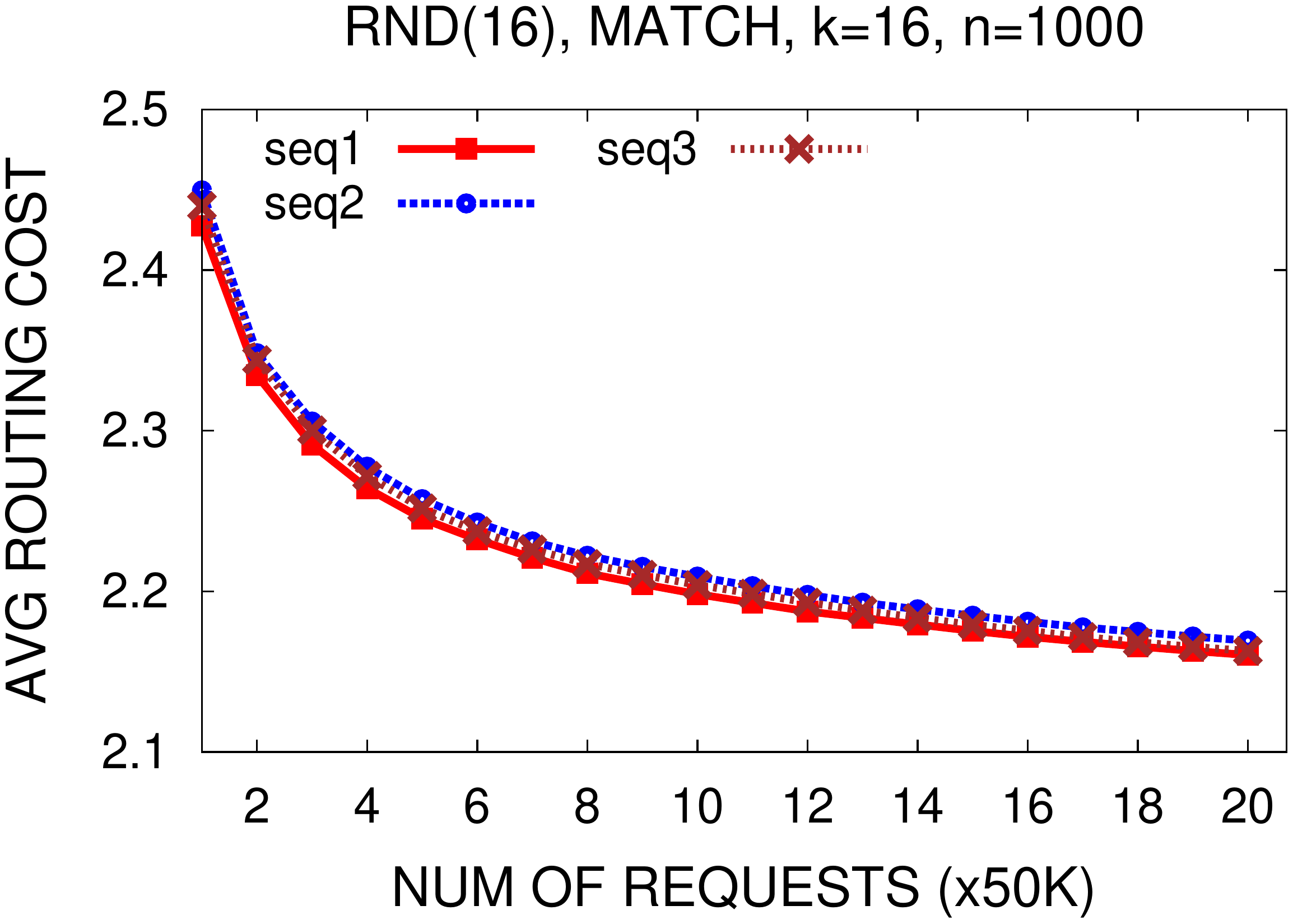}
\caption{Convergence for three representative runs under the $\Obst$ request model and for 1000 peers.}
\label{fig:cost_plots}
\end{figure}

The BSTs in $\Obst$ are also relatively independent in the sense that different links are used. We performed an experiment in which we run a long sequence of routing requests on the $\Obst(16)$ and then started to remove random peers. After each peer removal we measured the \emph{largest connected component} and the fraction of peer pairs that can communicate \emph{within a single connected BST}, and with respect to the overall $\Obst(16)$ graph connectivity. Single tree connectivity assumes that we only allow for local routing over a given BST, while the overall graph connectivity assumes that we can route on all the graph edges.

Figure~\ref{fig:robustness} shows that a large fraction of peers indeed stays connected by
a single BST, even under a large number of peer removals.
But the figure also shows that
if the connectivity is not counted with respect to a single BST (``Tree'' curve in the plot) but over all BSTs (``Graph'' curve in the plot), not surprisingly,
the robustness would be much higher. 

Our current $\Obst$ overlay employs
routing on single BSTs only. In order to exploit inter-BSTs links, either alternative routing
protocols (for instance a standard distance vector solution) could be used, or one may locally
recompute BSTs within the connected component. We leave these directions for future research,
and conclude that there is a potential for improvement in such scenarios.

\begin{figure}
\centering
\subfloat[]{\includegraphics[width=0.725\columnwidth]{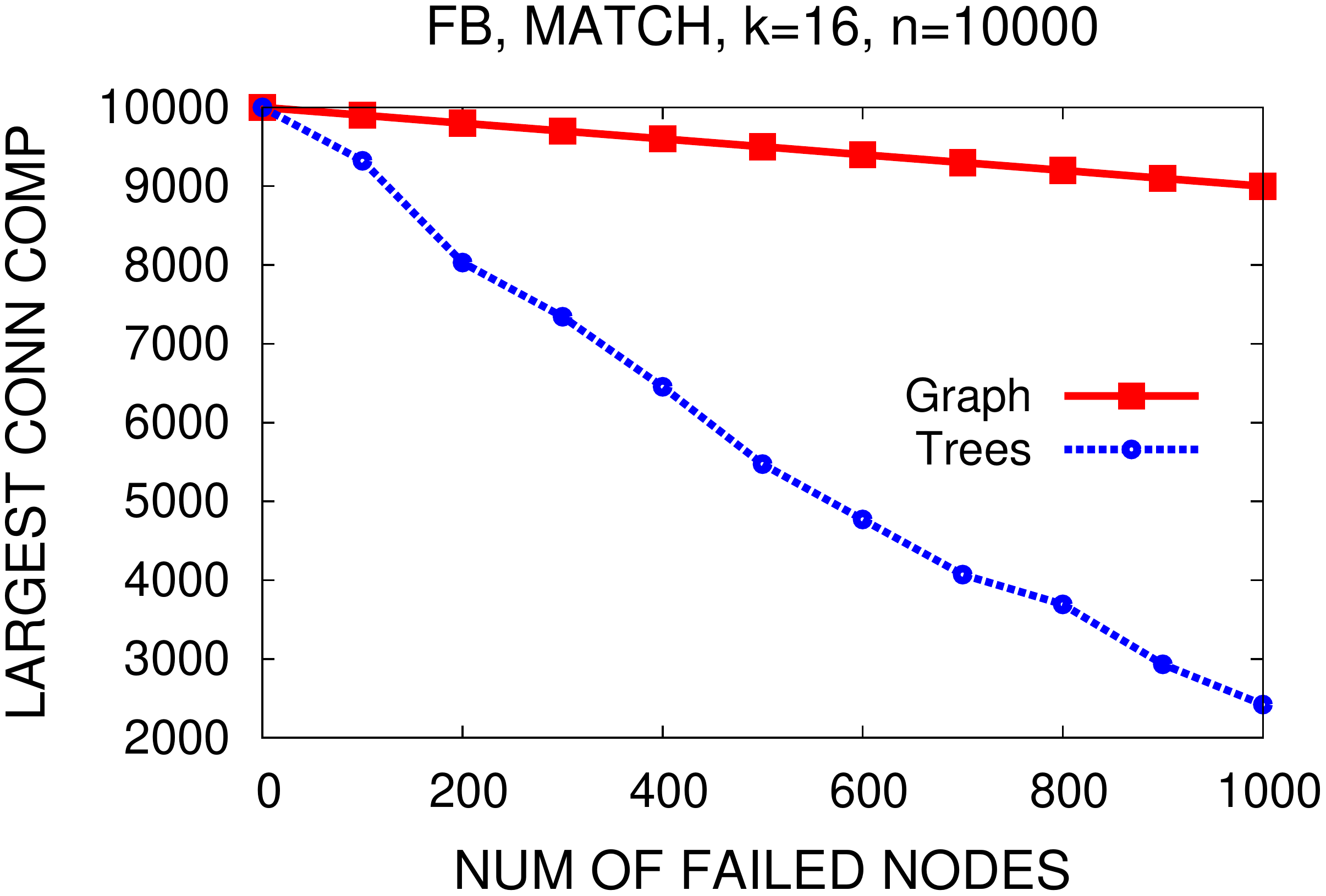}}\\
\subfloat[]{\includegraphics[width=0.725\columnwidth]{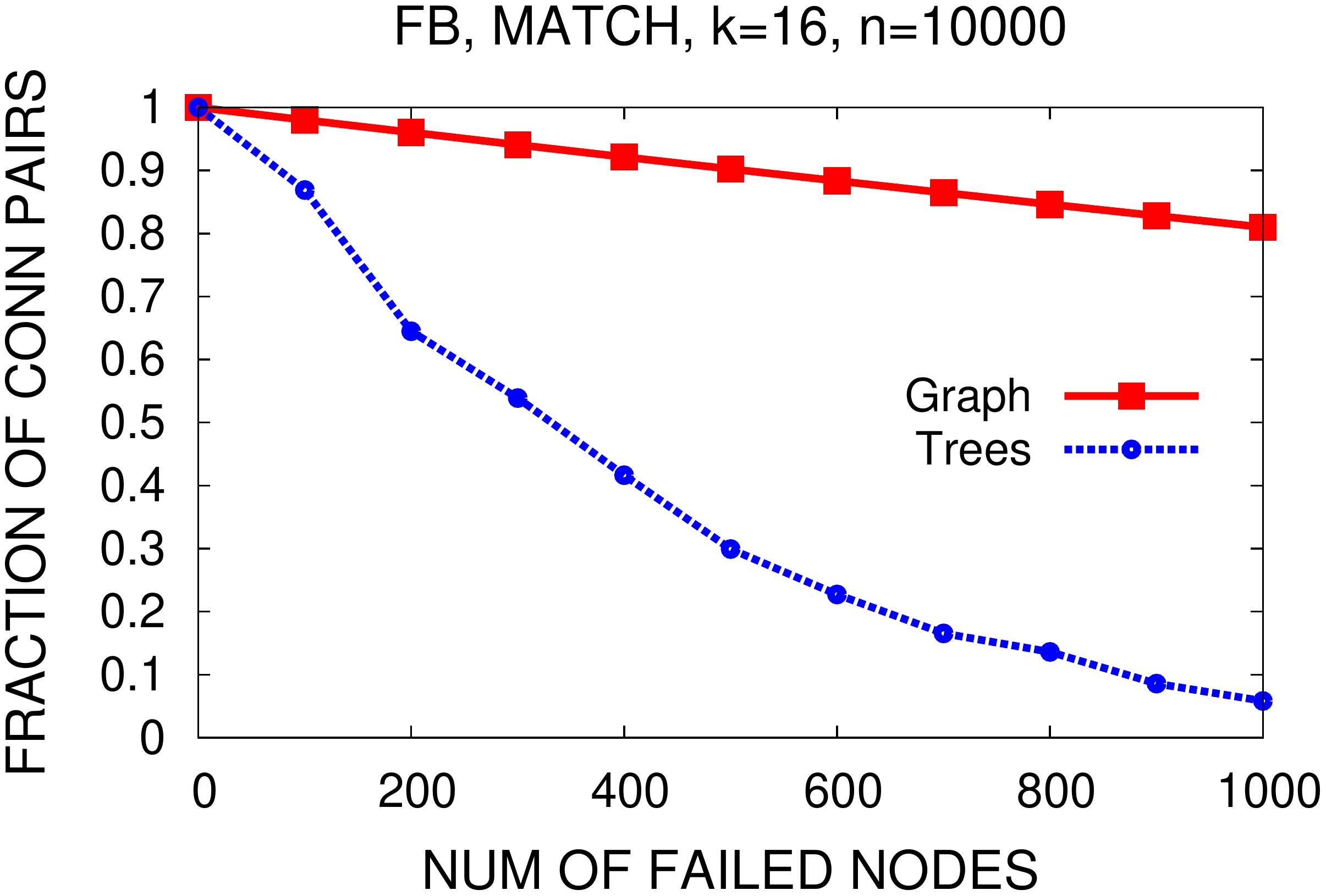}}
\caption{Largest connected component and fraction of connected pairs as a function of the number of randomly removed (failed) peers. ``Trees'' indicates the connectivity
with respect to a single BST, ``Graph'' indicates the connectivity with respect to all BST edges together.}
\label{fig:robustness}
\end{figure}

\subsection{Performance under Churn}\label{ssec:churn}

Peer-to-peer systems are typically highly dynamic, especially \emph{open} p2p networks where users can join and leave arbitrarily. This raises the question
of how well $\Obst$ can be tailored towards a traffic pattern under churn: How does the routing cost deteriorate with higher churn rates?
To investigate this question, we consider a simple scenario where after every routing request (in $\sigma$), $\lambda$ many random peers
leave $\Obst$ (i.e., are removed from all BSTs, according to the procedure sketched in Section~\ref{ssec:joinleave}). In order to keep the same dimension of the traffic matrix, upon the removal of each of the $\lambda$ peers, we immediately join another, new peer at the leaf of a corresponding BST.

Figure~\ref{fig:churn} shows how the routing cost depends on the churn rate $\lambda$. We can see that for increasing churn rates (and under traffic patterns generated using the $\textsc{Rnd}(16)$ guest graph), the routing cost increases moderately; for larger $\lambda$, the marginal cost effect becomes smaller. In additional experiments, we also observed that for the $\textsc{FB}$ and $\textsc{BT}$ guest graphs, the cost is almost agnostic to $\lambda$, implying that if $\Obst(k)$ is not perfectly adaptable to the traffic pattern, nodes can join and leave without affecting the routing cost by much.

\begin{figure}
\centering
\includegraphics[width=0.725\columnwidth]{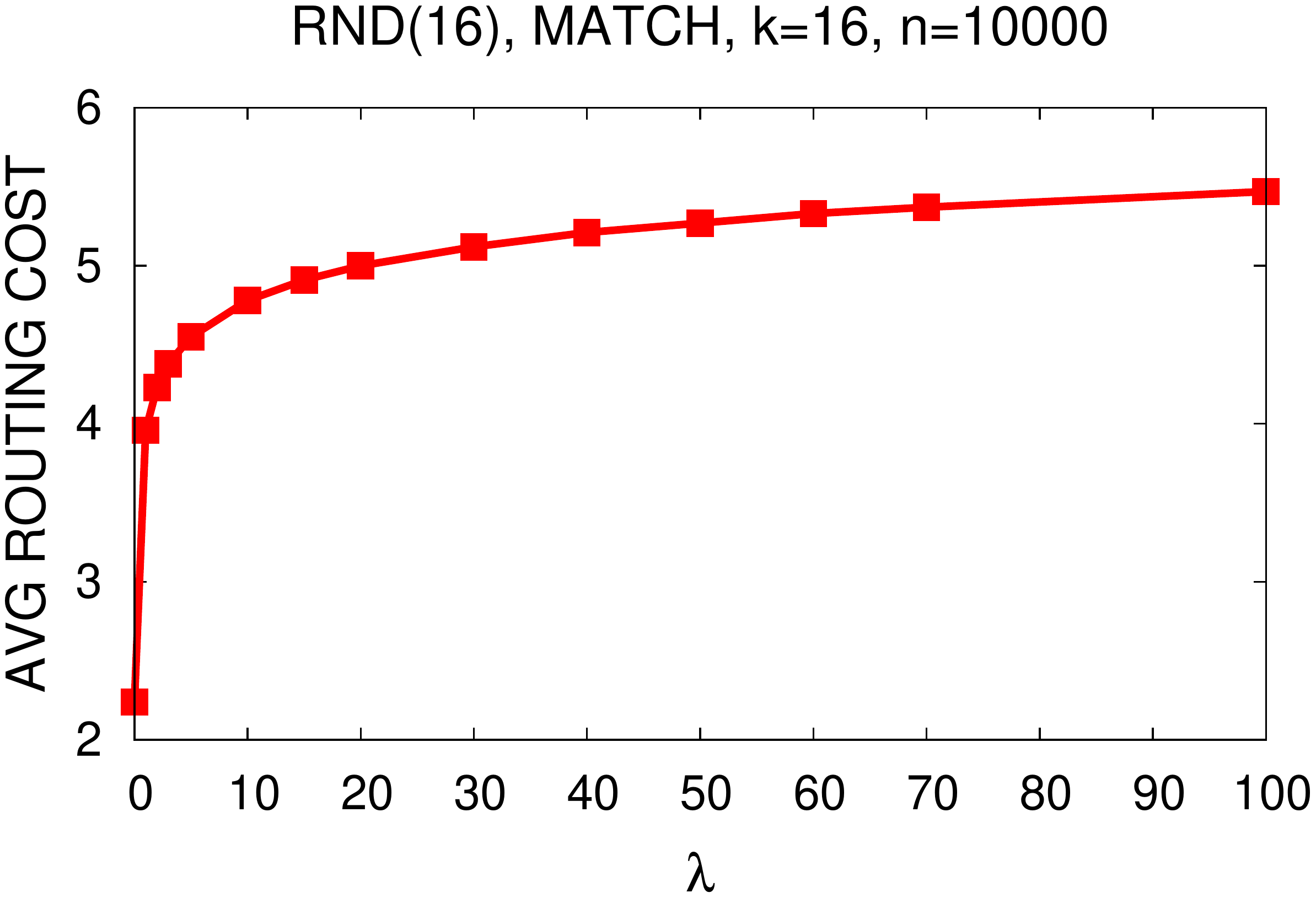}
\caption{Average routing distance when $\lambda$ nodes leave and rejoin between every routing request.}
\label{fig:churn}
\end{figure}

\section{Implications for a Lookup Model}\label{sec:discussion}

Interestingly, it turns out that while having multiple BSTs can significantly improve the routing cost (see Theorem~\ref{thm:2trees-bad}), the benefit of having parallel BSTs is rather limited in the context of classical lookup data structures, i.e., if all requests originate from a single node (the root).

Consider a sequence $\sigma=(v_0,v_1,\ldots,v_{m-1})$, $v_i\in V$ of \emph{lookup} requests, and $\left|V\right|=n$.
Theorem~\ref{static_lookup_lower_single} can be generalized to $k$ parallel lookup BSTs.
\begin{theorem}\label{static_lookup_lower_k_set}
Given $\sigma$, for any $\Obst(k)$:
\begin{align}
\Cost( \STAT, \Obst(k), \sigma) \ge \frac{H(\hat{Y})-\log k}{\log 3},
\end{align}
\noindent where $\hat{Y}(\sigma)$ is the empirical frequency distribution of $\sigma$ and  $H(\hat{Y})$ is its empirical entropy.
\end{theorem}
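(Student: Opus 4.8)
The plan is to lift the information-theoretic (Kraft-inequality) argument behind Theorem~\ref{static_lookup_lower_single} from one search tree to $k$ of them. Fix an arbitrary $\Obst(k)$ with BSTs $T_1,\dots,T_k$ having roots $r_1,\dots,r_k$. For a single BST $T_j$ and an item $v$, a successful search for $v$ inspects $c_j(v):=\d_{T_j}(r_j,v)+1$ nodes, and the sequence of comparison outcomes along the search path is a word in $\{<,>\}^{\,c_j(v)-1}\cdot\{=\}$; the words associated with distinct items form a prefix-free set over the ternary alphabet $\{<,=,>\}$, so Kraft's inequality yields $\sum_{v\in V}3^{-c_j(v)}\le 1$ --- this is exactly the fact responsible for the $1/\log 3$ factor in Theorem~\ref{static_lookup_lower_single}.

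In $\Obst(k)$ a lookup for $v$ is routed along the tree minimizing its cost, so it costs $c(v):=\min_{1\le j\le k}c_j(v)$, and since $\STAT$ performs no rotations, $\Cost(\STAT,\Obst(k),\sigma)=\sum_{v\in V}\hat y_v\,c(v)$, where $\hat y_v$ is the empirical frequency of $v$ in $\sigma$. The crucial step is that the per-tree Kraft sums add up: $\sum_{v}3^{-c(v)}=\sum_{v}\max_{j}3^{-c_j(v)}\le\sum_{j=1}^{k}\sum_{v}3^{-c_j(v)}\le k$. I then invoke the standard Gibbs / source-coding inequality: with $Z:=\sum_v 3^{-c(v)}\le k$ and $r_v:=3^{-c(v)}/Z$ a probability distribution,
\begin{align*}
\sum_v\hat y_v\,c(v)=-\log_3 Z+\sum_v\hat y_v\log_3\tfrac1{r_v}=-\log_3 Z+H_3(\hat Y)+D_3(\hat Y\,\|\,r)\ \ge\ H_3(\hat Y)-\log_3 k ,
\end{align*}
using non-negativity of the base-$3$ KL divergence $D_3$ and $Z\le k$. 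Rewriting the base-$3$ quantities in base $2$ ($H_3=H/\log 3$, $\log_3 k=\log k/\log 3$) gives the claimed bound; setting $k=1$ recovers Theorem~\ref{static_lookup_lower_single} as a consistency check.

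This argument is short, and the only genuine content is the inequality $\sum_v 3^{-c(v)}\le k$ --- i.e. recognizing that ``min over trees'' turns into ``max over the weights $3^{-c_j(v)}$'', which is dominated by their sum. An alternative route, more parallel to the paper's own Theorem~\ref{static_lookup_upper_k_set}, assigns each item to its cheapest tree, producing a partition $V=V_1\cup\dots\cup V_k$ with frequency masses $\beta_1,\dots,\beta_k$; a pruning lemma (deleting the non-$V_j$ nodes of $T_j$ by ordinary BST deletion only splices nodes out and hence never increases a depth) lets Theorem~\ref{static_lookup_lower_single} bound the $V_j$-contribution by $\tfrac{\beta_j}{\log 3}H(\hat Y_j)$, after which the entropy grouping identity $H(\hat Y)=H(\beta_1,\dots,\beta_k)+\sum_j\beta_j H(\hat Y_j)\le\log k+\sum_j\beta_j H(\hat Y_j)$ closes the gap. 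In that version the one place requiring care is the pruning lemma; I would nonetheless present the Kraft version as primary, since it is entirely self-contained.
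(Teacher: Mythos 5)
Your proposal is correct, and your primary Kraft--Gibbs argument is a genuinely different (and more self-contained) route than the paper's. The paper argues at the level of whole trees: it assumes w.l.o.g.\ that each $T_i$ serves a fixed subset $V_i$ of items, invokes Theorem~\ref{static_lookup_lower_single} separately inside each tree, and combines the $k$ contributions via the entropy grouping identity $H(\hat{Y})=H(\alpha_1,\ldots,\alpha_k)+\sum_i \alpha_i H(\hat{Y}_i)$ together with $H(\alpha_1,\ldots,\alpha_k)\le \log k$ --- i.e., exactly your ``alternative route''. You rightly flag the one step the paper leaves implicit there: applying the single-tree bound to $T_i$ on the sub-alphabet $V_i$ requires that the extra items of $V\setminus V_i$ present in $T_i$ cannot make those lookups cheaper than in some BST over $V_i$ alone (your pruning lemma; it does hold, since splicing nodes out of a BST never increases a surviving node's depth). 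Your primary argument sidesteps both the partition and the pruning: per-tree prefix-freeness of the ternary comparison words gives $\sum_v 3^{-c_j(v)}\le 1$, the min-over-trees cost turns this into $\sum_v 3^{-c(v)}\le k$, and Gibbs' inequality yields $\sum_v \hat{y}_v\, c(v)\ge (H(\hat{Y})-\log k)/\log 3$ in one stroke, with the $+1$ in the paper's cost definition matching your $c_j(v)=\d_{T_j}(r_j,v)+1$. What your route buys is a first-principles proof that makes the additive $\log k$ loss transparent (the Kraft sum inflates from $1$ to at most $k$); what the paper's route buys is brevity given Theorem~\ref{static_lookup_lower_single} as a black box and a form that mirrors the upper bound of Theorem~\ref{static_lookup_upper_k_set}.
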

\begin{proof}
Let $f(i)$ denote the number of times the node $v_i$ appeared in the lookup sequence $\sigma$. The empirical frequency distribution is $\hat{p}(i)=f(i)/m$
for all $i$, and the entropy is given by $H(\hat{Y})=H(\hat{p}(1),\ldots,\hat{p}(n))$.
Since it is sufficient to serve a node by
one BST only, we can assume w.l.o.g.~each BST $T_i\in \Obst(k)$ is used to serve the lookup requests for a specific subset $V_i\in V$, and that $\bigcap_{i=1}^k V_i = \emptyset$ and $\bigcup_{i=1}^k V_i = V$.

Let $\hat{Y}_i$ be the empirical measure of the frequency distribution of the nodes in $V_i$ with respect to the lookup sequence $\sigma$.
Using entropy decomposition property, we can write:
$
H(\hat{Y})=H(\hat{p_1},\hat{p_2},\ldots,\hat{p}_n)$
$=H(\alpha_1,\alpha_2,\ldots,\alpha_k)+\sum_{i=1}^k\alpha_i H(\hat{Y}_i)$,
where $\alpha_i=\sum_{v_j\in V_i}\hat{p}_j$.

Always performing lookups on the optimal BST, we get using Theorem \ref{static_lookup_lower_single}:
$
\Cost( \STAT, \Obst(k), \sigma) \ge$ $ \sum_{i=1}^k \alpha_i \frac{1}{\log 3} H(\hat{Y}_i)$
$=\frac{H(\hat{Y})-H(\alpha_1,\alpha_2,\ldots,\alpha_k)}{\log 3}$
$\ge\frac{H(\hat{Y})-\log k}{\log 3}$.
\end{proof}

%
%



\section{Related Work}\label{sec:relwork}

The high rate of peer joins and leaves is arguably one of the
unique challenges of open p2p networks.
In order to deal with such transient behavior or even
topological attacks, many robust and self-repairing overlay networks
have been proposed in the literature:~\cite{skipplus,dynhc,shell}.
However, much less is known on networks which automatically optimize
towards a changing \emph{communication pattern}.

The p2p topologies studied in the literature are often hypercubic (e.g., Chord, Kademlia, or Skip Graph~\cite{aspnes2007skip}),
but there already exist multi-tree approaches, especially in the context of multicast~\cite{scribe} and streaming systems~\cite{disc07stefan}.

We are only aware of two papers on demand-optimized or self-adjusting overlay networks: Leitao et al.~\cite{xbot} study an overlay supporting gossip or epidemics on a dynamic topology. In contrast to our work, their focus is on unstructured networks (e.g., lookup or routing is not supported), and there is no formal evaluation. The paper closest to ours is~\cite{ipdps13stefan}. Avin et al.~initiate the study
of self-adjusting splay BSTs and introduce the double-splay algorithm. Although their work
regards a distributed scenario, it focuses on a single BST only. Our work builds upon these
  results and investigates the benefits of having multiple trees, which is also more realistic
  in the context of p2p computing.

More generally, one may also regard geography~\cite{geodemlia} or latency-aware~\cite{pastry} p2p systems as providing a certain degree of self-adaptiveness. However, these systems are typically optimized towards more static criteria, and change less frequently. This also holds for the p2p topologies tailored towards the ISPs' infrastructures~\cite{p2p-isp}.

Our work builds upon classic data
structure literature, and in particular on the splay
tree concept~\cite{sleator1985self}. Splay trees are optimized BSTs which move more popular items closer to the root in
order to reduce the average access time. Regarding
the splay trees as a network, \cite{sleator1985self} describes self-adjusting networks
for \emph{lookup} sequences, i.e., where the source is a \emph{single (virtual) node}
that is connected to the root. Splay trees have been studied
intensively for many years (e.g.~\cite{splaytrees2,sleator1985self}),
and the famous dynamic optimality conjecture continues to puzzle researchers~\cite{demaine2004dynamic}:
The conjecture claims that splay trees perform
as well as any other binary search tree algorithm.
Recently, the concurrent splay tree variant CBTrees~\cite{cbtree} has been proposed.
Unlike splay trees, CBTrees perform rotations infrequently and closer to the leaves;
this improves scalability in multicore settings.

Regarding the static variant of our problem, our work is also related to network design problems (e.g.,~\cite{net-design}) and,
more specifically, graph embedding algorithms~\cite{layout-sur}, e.g.,
the \emph{Minimum Linear Arrangement} (MLA) problem, originally studied by Harper~\cite{harper} to design
error-correcting codes. From our perspective,
MLA can be
seen as an early form of a ``demand-optimized'' embedding on the \emph{line} (rather than the BST as in our case):
Given a set of communication pairs, the goal is to flexibly arrange
the nodes on the line network such that the average communication
distance is minimized. While there exist many interesting algorithms
for this problem already (e.g., with sublogarithmic approximation
ratios~\cite{FL07}), no non-trivial results are known about
distributed and local solutions, or solutions on the tree as presented here.

\section{Conclusion}\label{sec:conclusion}

This paper initiated the study of p2p overlays which are statically optimized for or adapt to
specific communication patterns. We understand our algorithms and bounds as a first step,
and believe that they open interesting directions for future research.
For example, it would be interesting to study the multi-splay overlay from the perspective of online algorithms:
While computing the competitive ratio achieved by classic
splay trees (for lookup) arguably constitutes one of the most exciting open questions in
Theoretical Computer Science~\cite{demaine2004dynamic}, our work shows that the routing variant of the problem is rather different in nature (e.g.,
results in much lower cost).
 Another interesting research direction regards
alternative overlay topologies: while we have focused on a natural BST approach,
other graph classes such as the frequently used hypercubic networks and skip graphs~\cite{aspnes2007skip} may
also be made self-adjusting. Since these topologies also include tree-like subgraphs,
we believe that our results may serve as a basis for these extensions accordingly.


\textbf{Acknowledgments.} We are grateful to Chao Zhang, Prithula
Dhungel, Di Wu and Keith W.~Ross~\cite{bt-ross} for providing us with the
BitTorrent data.

\bibliographystyle{abbrv}

\begin{appendices}
\section{Proof of Theorem \ref{thm:2trees-bad}}
\begin{theorem6}\label{thm:2trees-bad2}
A single additional BST can reduce the amortized costs from a best possible value of $\Omega(\log{n})$ to $O(1)$.
\end{theorem6}
\begin{proof}
Consider the two BSTs $T_1=(V,E_1)$ and $T_2=(V,E_2)$ in Figure~\ref{fig:twotrees}.
Clearly, the two BSTs can be perfectly embedded into $\Obst(2)$ consisting of two BSTs as well. However, embedding the two trees at low cost
in one BST is hard, as we will show now.

Formally, we have that, where $V=\{1,\ldots,n\}$ (for an even $n$):
$E_1 =$ $(\{i,n/2-i+1\}: \forall i \in [1,n/4] )$ $\cup (\{n/2-i,i+2\}: \forall i \in [0,n/4-2] )$ $\cup (\{n/2+i,n-i\}: \forall i \in [0,n/4-1] )$
$\cup (\{n-i,n/2+1+i\}: \forall i \in [0,n/4-1] )$, and $E_2 = (\{i,n-i+1\}: \forall i \in [1,n/2] )$ $\cup (\{n-i,i+2\}: \forall i \in [0,n/2-2] )$
\noindent i.e., BST $T_2$ is ``laminated'' over the peer identifier space, and BST $T_1$ consists of two laminated subtrees over half of the nodes each.
Consider a request sequence $\sigma$ generated from these two trees with a uniform empirical distribution over all source-destination requests.
Clearly, optimal $\Obst(2)$ will serve all the requests with cost 2, since all the requests will be neighbors in $\Obst(2)$.
In order to show the logarithmic lower bound for the optimal $\Obst(1)$, we leverage the interval cut bound from Theorem~11 in~\cite{ipdps13stefan}. Concretely, we will show that for any interval $I$ of size $n/8 < \ell < n/4$  an $\Omega(\ell)$ (and hence an $\Omega(n)$) fraction of requests have one endpoint inside $I$ and the other endpoint outside $I$. In other words, each interval has a linear cut, and the claim follows since the empirical entropy is $\Omega(\log n)$.

The proof is by case analysis. \emph{Case~1}: Consider an interval $I=[x,x+\ell]$ where $x+\ell < n/2$. Then, the claim follows directly from tree $T_2$, as each node smaller or equal $n/2$ communicates with at least one node larger than $n/2$, so the cut is of size $\Omega(\ell)$. Similarly in \emph{Case~2} for an interval $I=[x,x+\ell]$ where $x \geq n/2$. In \emph{Case~3}, the interval $I=[x,x+\ell]$ crosses the node $n/2$, i.e., $x< n/2$ and $x+\ell > n/2$. Moreover, note that since $n/8 < \ell < n/4$, $n/4<x$ and $x+\ell<3n/4$ hold. The lower bound on the cut size then follows from tree $T_1$: each node $<n/2$ is connected to a node $\leq n/4$ outside the interval, and each node $>n/2$ is connected to a node $\geq n/2$ outside the interval.
\end{proof}


\section{Limitations on Perfect Overlays}
We examine in more detail the special case of \emph{perfect} overlays: overlays which accommodate a given communication pattern with the minimum
amortized costs of one. It is not surprising that many BSTs are required for perfect overlays, i.e., $k$ must be large in the $\Obst(k)$ overlay.

The concept of \emph{intersecting requests} plays a crucial role on the existence of perfect overlays.
\begin{definition}[Intersecting Requests ($\boldsymbol{\oplus}$)]\label{def:inter}
The communication requests between peer pair $(i,j)$ and peer pair $(k,\ell)$ \emph{intersect} if and only if $(k\in (i,j) \wedge \ell \notin (i,j)) \vee (k\notin (i,j) \wedge \ell \in (i,j))$. We denote an intersecting requests pair by $(i,j)\oplus(k,\ell)$.
\end{definition}
The inherent difficulty of perfectly embedding intersecting requests in a single BST is captured by the following lemma.
\begin{lemma}\label{lemma:intersected_req}
If $(i,j),(k,\ell)\in\sigma$ and $(i,j)\oplus(k,\ell)$, then if $(i,j)\in E(T)$, $(k,\ell)\notin E(T)$, for any BST $T$.
\end{lemma}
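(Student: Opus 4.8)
The plan is to argue by contradiction, using the fundamental structural fact that a BST is "laminar" with respect to the natural order on the identifier set $V=\{1,\dots,n\}$: for any node $u$ in a BST $T$, the subtree $T(u)$ is exactly the set of nodes whose identifiers lie in a contiguous interval $[u',u'']$ with $u'\le u\le u''$ (this is precisely the invariant exploited in Claim~\ref{clm:locrouting}). I would first assume $(i,j)\in E(T)$ and $(k,\ell)\in E(T)$ simultaneously, and derive a violation of this laminarity.

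The key observation is this: if $(i,j)\in E(T)$, then one of $i,j$ is the parent of the other in $T$; say $i$ is the parent and $j$ the child. Then $T(j)$ is a subtree hanging off $i$, and the interval $(i,j)$ (the set of identifiers strictly between $i$ and $j$) is contained in the identifier-interval spanned by $T(j)$ — indeed $T(j)$ is either the left child (so its interval is $[a,i-1]$ for some $a\le j$, which contains everything strictly between $j$ and $i$) or the right child (interval $[i+1,b]$ with $b\ge j$). Either way, every node with identifier strictly between $i$ and $j$ lies in $T(j)$, and no node outside $[\min(i,j),\max(i,j)]$ lies in $T(j)$. Now the intersection hypothesis $(i,j)\oplus(k,\ell)$ says exactly that one of $k,\ell$ lies strictly inside $(i,j)$ and the other lies outside the closed interval (we should double-check the boundary cases $k,\ell\in\{i,j\}$, but these are excluded since $\{k,\ell\}$ being an edge with an endpoint equal to $i$ or $j$ can be handled separately or ruled out). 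Hence one endpoint of $(k,\ell)$ is in $T(j)$ and the other is not in $T(j)$. But $(k,\ell)\in E(T)$ means one of $k,\ell$ is the parent of the other; a child and its parent cannot be separated by the subtree $T(j)$ — if the child is in $T(j)$ then so is its parent unless the child is the root of $T(j)$, i.e.\ the child is $j$ itself, which we have excluded. This is the contradiction.

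The main obstacle I expect is the careful bookkeeping of the boundary/degenerate cases: when one of $k,\ell$ equals $i$ or $j$, when $j$ is itself the root of the whole tree (so it has no parent and the argument about parents must be reorganized), and making precise the claim "a parent and child of $T$ cannot straddle a subtree $T(j)$ unless the child is the root of $T(j)$." Each of these is routine once the laminar-interval description of BST subtrees is written down cleanly, so I would state that description as a preliminary fact (citing the invariant already used in Claim~\ref{clm:locrouting}) and then the case analysis becomes a short check. A cleaner packaging avoiding explicit parent/child talk: $(i,j)\in E(T)$ forces the edge to separate $V$ into the two sides "inside $T(j)$" and "outside $T(j)$," which are the interval $(i,j)$ (up to including $j$) and its complement; removing the edge $(k,\ell)$ must likewise separate $V$ into a subtree-interval and its complement; two such bipartitions coming from distinct edges of a tree are "nested or disjoint," contradicting the crossing condition $\oplus$. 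I would likely present the proof in this second, more conceptual form.
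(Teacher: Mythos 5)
Your overall strategy (interval structure of BST subtrees plus ``a tree edge cannot straddle a subtree except through its root'') is the right family of argument, and it is the same idea the paper uses, but the specific separating subtree you picked makes a key claim false and leaves one case unhandled. You assert that, with $j$ the child of $i$, ``every node with identifier strictly between $i$ and $j$ lies in $T(j)$, and no node outside $[\min(i,j),\max(i,j)]$ lies in $T(j)$.'' The second half is wrong: if $j$ is the left child of $i$ (say $j<i$), then $T(j)$ is the \emph{entire} left subtree of $i$, an interval $[a,i-1]$ with $a\le j$ that may extend well below $j$. Concretely, take the BST with root $6$, left child $4$, children $2$ and $5$ of $4$, and children $1,3$ of $2$: for the edge $(i,j)=(6,4)$ and the intersecting pair $(k,\ell)=(5,2)$ we have $\ell=2\in T(4)$, so both endpoints of $(k,\ell)$ lie in $T(j)$ and your straddling argument produces no contradiction. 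This is exactly the case ``$\ell<j$'' (the outside endpoint on the far side of $j$ from $i$), one of the two cases the paper's proof treats explicitly, and it is not covered by your argument as written. The same defect undermines your ``cleaner packaging'': the subtrees $T(j)$ and $T(c)$ being nested is perfectly consistent with laminarity, so ``nested or disjoint bipartitions'' does not by itself contradict $(i,j)\oplus(k,\ell)$, because the subtree interval $T(j)$ is in general strictly larger than the request interval $[\min(i,j),\max(i,j)]$.

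The repair is precisely the paper's finer step: descend one level and separate along the subtree rooted at the child of $j$ on the side toward $i$ (the right subtree of $j$ when $j<i$). Its node set is \emph{exactly} the open interval between $j$ and $i$, and its attachment point is $j$; hence $k$ lies in it, $\ell$ does not (whether $\ell>i$ or $\ell<j$), and a parent--child pair other than one using $j$ cannot have one endpoint inside and one outside, which is the contradiction. With that substitution your proof becomes essentially the paper's case analysis. One further caution: the boundary case you wave off ($k$ or $\ell$ equal to $i$ or $j$) cannot be ``handled separately'' --- the lemma is literally false there (e.g.\ edges $\{i,j\}$ and $\{j,k\}$ with $j<k<i$ can coexist in a BST and intersect per Definition~\ref{def:inter}); like the paper, you must read the lemma as applying to requests with disjoint endpoint pairs, which is how it is used for matchings in Theorem~\ref{thm:nobody-is-perfect}.
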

\begin{proof}
 W.l.o.g., assume $(i,j)\in E(T)$, $i>j$, $x\in (i,j)$, and that peer $j$ appears as a left child of peer $i$.
 Then, $x$ should be embedded into the right subtree of $j$ (since $j<k<i$). There are two cases: $\ell>i$ and $\ell<j$. If $\ell>i$, then $\ell$ cannot be in the left subtree of $i$, thus $(k,\ell)\notin E(T)$ (since $x$ is in the left subtree of $i$). If $\ell<j$, then $\ell$ cannot be in the right subtree of $j$, thus $(k,\ell)\notin E(T)$ (since $x$ is in the right subtree of $j$).
\end{proof}

The following theorem shows that already for seemingly independent communication requests due to a random matching on the set of peers,
many BSTs are required for a perfect $\Obst$.
\begin{theorem}\label{thm:nobody-is-perfect}
Let $\sigma$ be a sequence of communication requests coming from a random perfect matching on the complete peer graph. Then, with probability
at least $1-1/r$, there is no perfect overlay with
$k=\tfrac{n}{r}$ or less BSTs, for any $r\geq \sqrt{n\ln n}$ (a parameter).
\end{theorem}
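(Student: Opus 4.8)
The plan is to show that a random perfect matching $\sigma$ contains many pairwise intersecting requests, so that by Lemma~\ref{lemma:intersected_req} no single BST can host more than a bounded number of them; dividing the total number of matching edges by this bound will force $k$ to be large. First I would recall that by Lemma~\ref{lemma:intersected_req}, if two requests $(i,j)$ and $(k,\ell)$ intersect (in the $\oplus$ sense), then any BST can realize at most one of them as an edge; hence any single BST $T_s$ in a perfect $\Obst(k)$ can serve at most one request from any mutually intersecting family. So if I can exhibit, with high probability, a set $M$ of matching edges that are \emph{pairwise} intersecting with $|M|\geq n/r$, then each of the $k$ BSTs covers at most one edge of $M$, forcing $k\geq|M|\geq n/r$, contradicting $k\leq n/r$ unless we are in the low-probability event. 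Actually, to be careful, a perfect overlay needs \emph{every} request to be an edge in \emph{some} BST; a pairwise-intersecting family of size $t$ therefore needs $\geq t$ distinct BSTs. So the whole problem reduces to: with probability $\geq 1-1/r$, a random perfect matching on $[n]$ has a pairwise-intersecting sub-family of size $> n/r$.

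Next I would set up the combinatorial core. Think of the matching edges as chords on the linearly ordered point set $\{1,\dots,n\}$; two chords $(i,j)$ and $(k,\ell)$ intersect in the $\oplus$ sense precisely when they ``cross'' as chords (exactly one endpoint of one lies strictly between the endpoints of the other). So I need a large set of pairwise-crossing chords in a random matching — equivalently, a large decreasing subsequence in the permutation naturally associated with the matching. Concretely, pair up $[n]$ into a left half $L=\{1,\dots,n/2\}$ and right half $R=\{n/2+1,\dots,n\}$; condition on (or lower-bound the probability of) the event that at least a constant fraction, say $n/4$, of the matching edges go between $L$ and $R$ — this holds with overwhelming probability by a standard concentration / hypergeometric argument. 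Restricting to those ``crossing-type'' edges, the left endpoints are some subset $L'\subseteq L$ and the edges define a uniformly random bijection $\pi: L'\to R'$; two such edges $(a,\pi(a))$ and $(b,\pi(b))$ with $a<b$ cross iff $\pi(a)<\pi(b)$ — i.e. they cross exactly when $\pi$ is \emph{increasing} on $\{a,b\}$ (after the usual relabelling of $R$ in increasing order the crossing condition becomes an increasing-subsequence condition). A pairwise-crossing family of size $t$ is then exactly an increasing subsequence of length $t$ in a uniformly random permutation of size $\Theta(n)$.

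The last step invokes the classical fact (Erdős–Szekeres / Vershik–Kerov / Logan–Shepp) that a uniformly random permutation of $m$ elements has longest increasing subsequence $(1+o(1))\cdot 2\sqrt{m}$, and in particular of length $\geq \sqrt{m}$ with probability $\to 1$; more usefully for an explicit tail bound, one can use the simple first/second-moment or Erdős–Szekeres deterministic pigeonhole to guarantee length $\geq\sqrt{m}$ deterministically once we have $m$ elements \emph{and} know the longest decreasing subsequence is short — but since we only need a \emph{lower} bound on the increasing subsequence, the cleanest route is: any sequence of $m$ distinct reals has an increasing or a decreasing subsequence of length $\geq\sqrt{m}$, and for a random permutation the decreasing one has length $O(\sqrt m)$ w.h.p., so the increasing one has length $\geq c\sqrt m = \Theta(\sqrt n)$ with probability $\geq 1-1/r$ once $r\geq\sqrt{n\ln n}$ makes the failure probability small enough. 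Since $\sqrt{n}\geq n/r$ exactly when $r\geq\sqrt n$, and we are given $r\geq\sqrt{n\ln n}\geq\sqrt n$, the pairwise-intersecting family has size $\geq\sqrt n\geq n/r$, so more than $k=n/r$ BSTs are required. \textbf{The main obstacle} is making the probability bound \emph{explicit} at the level $1-1/r$ with $r\geq\sqrt{n\ln n}$: the asymptotic longest-increasing-subsequence results are limit theorems, so I would instead prove a crude but self-contained tail bound — e.g., partition the $\Theta(n)$ crossing edges into $\sqrt n/\text{const}$ consecutive blocks by left-endpoint, note that within-block monotone runs or a union bound over $\binom{m}{\sqrt n}$ potential increasing subsequences (each occurring with probability $1/(\sqrt n)!$, which is superpolynomially small) gives the needed $1/r$-type bound — together with controlling the conditioning on ``$\geq n/4$ edges cross between $L$ and $R$'' via a Chernoff/Hoeffding estimate. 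Packaging these two estimates so their combined failure probability is at most $1/r$ for all $r\geq\sqrt{n\ln n}$ is the technical heart of the argument.
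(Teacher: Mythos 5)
Your reduction is the same as the paper's (by Lemma~\ref{lemma:intersected_req}, a family of pairwise intersecting matching edges of size $t$ forces $t$ distinct BSTs in any perfect overlay), but the way you produce the large crossing family is genuinely different. The paper's proof is far more elementary: it partitions $\{1,\dots,n\}$ into $k=n/r$ consecutive intervals of length $r$, pairs interval $I_i$ with interval $I_{i+k/2}$, observes that any edges realizing these pairings are mutually crossing (their endpoints interleave as $a_i<a_j<b_i<b_j$), and shows that a uniform matching connects each such interval pair with probability at least $1-(1-\tfrac{r}{n-1})^r\ge 1-\tfrac1n$; this is exactly where $r\ge\sqrt{n\ln n}$ is used, and a union bound over the $k/2$ pairs gives failure probability at most $k/n=1/r$. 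Your route instead conditions on a linear number of left--right edges and identifies pairwise-crossing edges with an increasing subsequence of a uniformly random permutation, then invokes longest-increasing-subsequence lower bounds. Your approach needs heavier machinery and the explicit-tail bookkeeping you yourself flag, but it buys a stronger conclusion: an $\Omega(\sqrt n)$ mutually crossing family with failure probability $e^{-\Omega(\sqrt n)}\ll 1/n\le 1/r$, which covers all admissible $r$ at once (and in fact shows no perfect overlay with $o(\sqrt n)$ trees exists w.h.p.), whereas the paper's construction is tailored to the $n/r$ scale and the $1-1/r$ guarantee.

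Two steps of your sketch need repair, though both are standard. First, the inference ``every sequence has an increasing or decreasing subsequence of length $\ge\sqrt m$, and the decreasing one is $O(\sqrt m)$ w.h.p., hence the increasing one is $\Omega(\sqrt m)$'' is invalid as stated (the $\sqrt m$ guarantee could be met by the decreasing subsequence); you need the product form of Erd\H{o}s--Szekeres/Dilworth, $\mathrm{LIS}\cdot\mathrm{LDS}\ge m$, combined with the union bound $\Pr(\mathrm{LDS}\ge t)\le\binom{m}{t}/t!$ applied to \emph{decreasing} subsequences (your union bound over increasing subsequences gives an upper tail for LIS, which is the wrong direction). Second, the expected number of left--right matching edges is about $n/4$, so conditioning on ``at least $n/4$'' does not hold w.h.p.; take a threshold strictly below the mean (say $n/8$) so the concentration bound applies. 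With these fixes your argument goes through and, as noted, is if anything stronger than the paper's.
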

\begin{proof}
We want to show that for $r\ge \sqrt{n\ln n}$, there exists, with high probability, a set of mutually intersecting requests of size $k=\tfrac{n}{r}$, and thus, we need at least $k$ trees to achieve perfect embedding (according to Lemma~\ref{lemma:intersected_req}).

Let us split all the $n$ nodes into $k=\tfrac{n}{r}$ consecutive non-overlapping intervals, each of size $r$: $\{1,\ldots,r\}\in I_1$, $\{r+1,\ldots,2r\}\in I_2$, \ldots, $\{n-r+1,\ldots,n\}\in I_k$. If there is at least one edge between a node in interval $i$ and a node in interval $j$, we say that these intervals are connected, and we denote this as: $I_i \leftrightarrow I_j$. A connection between a specific node $u$ to some node in the interval $I_i$, is denoted as $u \leftrightarrow I_j$. Now we find a bound on the probability that $I_i \nleftrightarrow I_j$. Let $u\in I_j$:
$
\Pr(u \leftrightarrow I_i)=\frac{r}{n-1}$, so
$\Pr(u \nleftrightarrow I_i)=1-\frac{r}{n-1}$, and
$\Pr(I_j \nleftrightarrow I_i) = \Pr\left(\bigcap_{u\in I_j} (u \nleftrightarrow I_i)\right)\le \left(1-\frac{r}{n-1}\right)^r$.

The last inequality is true since $\Pr\left(u \nleftrightarrow I_i \mid v \nleftrightarrow I_i\right)\le \Pr\left(u \nleftrightarrow I_i\right)$.

To continue the proof we need the following claim.
\begin{claim}
Let $r\ge\sqrt{n\ln n}$, then $(1-\frac{r}{n})^r\le \frac{1}{n}$.
\end{claim}
\begin{proof}
Let us denote $z=n\left(1-\frac{r}{n}\right)^r$. Then:
$\ln z = \ln n + r\ln{\left(1-\frac{r}{n}\right)}.
$ Using Taylor's expansion, we get:
$\ln{\left(1-\frac{r}{n}\right)}\le -\frac{r}{n}.
$
So:
$\ln z \le \ln n - \frac{r^2}{n}.
$ Thus, for $r\ge\sqrt{n\ln n}$, we get $\ln z\le 0$ and hence: $z\le 1$ which gives us:
$n(1-\frac{r}{n})^r  \le 1$, so
$(1-\frac{r}{n})^r \le \frac{1}{n}$.
\end{proof}

We can bound on the probability that two specific intervals are not connected:
$\Pr(I_j \nleftrightarrow I_i) \le \left(1-\frac{r}{n-1}\right)^r$
$\le \left(1-\frac{r}{n}\right)^r$
$\le \frac{1}{n}$.

Assuming that the number of intervals $x$ is even, consider the following scenario $\mathcal{S}$: $I_1 \leftrightarrow I_{x/2+1},I_2 \leftrightarrow I_{x/2+2},\ldots, I_{x/2} \leftrightarrow I_{x}$. Clearly, in this scenario, we have at least $x$ mutually intersecting requests. We can compute the probability for this to happen as:
$\Pr(\mathcal{S})=1-\Pr\left(\bigcup_{i=1,\ldots,k/2}(I_i \nleftrightarrow I_{i+k/2})\right)$
$\ge 1 - k\cdot\frac{1}{n}$
$=1-\frac{1}{r}$.
\end{proof}

\end{appendices}

\end{document}